\newcommand{\blind}{0}
\providecommand{\tightlist}{%
  \setlength{\itemsep}{0pt}\setlength{\parskip}{0pt}}
\patchcmd\longtable{\par}{\if@noskipsec\mbox{}\fi\par}{}{}
\newtheorem{theorem}{Theorem}[section]
\theoremstyle{definition}
\newtheorem{definition}{Definition}[section]
\theoremstyle{definition}
\theoremstyle{definition}
\theoremstyle{definition}
\theoremstyle{remark}
\begin{document}

\def\spacingset#1{\renewcommand{\baselinestretch}%
{#1}\small\normalsize} \spacingset{1}


\if0\blind
{
  \title{\bf Fast Bayesian Record Linkage for Streaming Data Contexts}

  \author{
        Ian Taylor \\
    Department of Statistics, Colorado State University\\
     and \\     Andee Kaplan \\
    Department of Statistics, Colorado State University\\
     and \\     Brenda Betancourt \\
    NORC at the University of Chicago\\
      }
  \maketitle
} \fi

\if1\blind
{
  \bigskip
  \bigskip
  \bigskip
  \begin{center}
    {\LARGE\bf Fast Bayesian Record Linkage for Streaming Data Contexts}
  \end{center}
  \medskip
} \fi

\bigskip
\begin{abstract}
Record linkage is the task of combining records from multiple files which refer to overlapping sets of entities when there is no unique identifying field. In streaming record linkage, files arrive sequentially in time and estimates of links are updated after the arrival of each file. This problem arises in settings such as longitudinal surveys, electronic health records, and online events databases, among others. The challenge in streaming record linkage is to efficiently update parameter estimates as new data arrive. We approach the problem from a Bayesian perspective with estimates calculated from posterior samples of parameters and present methods for updating link estimates after the arrival of a new file that are faster than fitting a joint model with each new data file. In this paper, we generalize a two-file Bayesian Fellegi-Sunter model to the multi-file case and propose two methods to perform streaming updates. We examine the effect of prior distribution on the resulting linkage accuracy as well as the computational trade-offs between the methods when compared to a Gibbs sampler through simulated and real-world survey panel data. We achieve near-equivalent posterior inference at a small fraction of the compute time. Supplemental materials for this article are available online.
\end{abstract}

\noindent%
{\it Keywords:} Bayesian Online Learning, Entity Resolution, Filtering, Streaming Inference

\vfill

\newpage
\spacingset{1.5} 

\hypertarget{introduction}{%
\section{Introduction}\label{introduction}}

Record linkage is the task of resolving duplicates in two or more overlapping sets of records, or files, from multiple noisy data sources, often without the benefit of having a unique identifier. For example, in a longitudinal survey setting it is possible to have multiple responses from the same person with misspellings or other data errors. This type of error is shown in Table \ref{tab:example-data}, where records 1 and 5 represent responses from the same person that were stored with a misspelling in the given name. This presents a problem for those that wish to use this data to make inferences. With the current accessibility and continuity of data, record linkage has become crucial for many areas of application including healthcare \citep{fleming2012record, HofRavelliZwinderman17}, official statistics \citep{winkler_2006, kaplan2022practical, wortman2019record}, and fraud detection and national security \citep{Vatsalan2017privacy}.

Although probabilistic approaches for record linkage have become more common in recent years, principled approaches that are computationally tractable and scalable for large data sets are limited \citep{binette2021some}. Moreover, existing approaches are not suited for streaming data settings, where inference is desired continuously. In the streaming context, data files are expected to arrive sequentially in time with no predetermined number of files. A limited portion of the machine learning literature has targeted the area of near real-time record linkage from a data-driven perspective \citep{Christen09, Ioannou2010OntheFlyEQ, Dey11online, Altwaijry17, Karapiperis2018}.

In this work, we propose new methodology to perform record linkage with streaming data in an efficient and statistically principled fashion under a Bayesian framework. A model-based approach, such as the one we propose, provides interpretable parameters and a way to encode prior knowledge about the data generation process. Bayesian inference also provides natural uncertainty quantification, allowing uncertainty from record linkage to propagate to downstream analysis \citep{kaplan2022practical}. This work presents the first model-based approach to perform record linkage in streaming data contexts.

\begin{table}

\caption{\label{tab:example-data}An example of noisy data in need of deduplication. Rows 1 and 5 refer to the same entity but differ due to an error in `Given Name'.}
\centering
\begin{tabular}[t]{lllr}
\toprule
Given Name & Surname & Age & Occupation\\
\midrule
\cellcolor[HTML]{DFDFDF}{maddisom} & \cellcolor[HTML]{DFDFDF}{ryan} & \cellcolor[HTML]{DFDFDF}{f} & \cellcolor[HTML]{DFDFDF}{3}\\
marleikh & hoffman & d & 4\\
samara & pater5on & d & 5\\
lili & wheatlry & f & 7\\
\cellcolor[HTML]{DFDFDF}{maddison} & \cellcolor[HTML]{DFDFDF}{ryan} & \cellcolor[HTML]{DFDFDF}{f} & \cellcolor[HTML]{DFDFDF}{3}\\
\bottomrule
\end{tabular}
\end{table}

A significant portion of the probabilistic record linkage literature has focused on linking two data files \citep{fellegi1969theory, tancredi2011hierarchical, gutman_2013, sadinle2017bayesian}. Recently, Bayesian approaches for multi-file record linkage have become popular \citep{sadinle2013generalized, sadinle2014detecting, steorts2016bayesian, betancourt2016flexible, aleshinguendel2021multifile}. In particular, \citet{aleshinguendel2021multifile} extend the Bayesian Fellegi-Sunter model of \citet{sadinle2017bayesian} through the use of a partition prior. However, the existing literature is limited to non-streaming settings where the number of files is fixed and known in advance, and record linkage is performed offline in a single procedure. Recent advances have made record linkage possible for big offline data settings, either by jointly performing blocking and entity resolution \citep{marchant2021dblink} or by quickly computing point estimates and approximating the posterior distribution \citep{mcveigh2020scaling}. Nonetheless, these approaches are not suited to efficiently assimilate new data. To address this gap in the literature from a fully model-driven perspective, we focus on developing a Bayesian model for multi-file record linkage that enables online data scenarios. Our approach uses recursive Bayesian computation techniques to produce samples from the full posterior that efficiently update existing draws from the previous posterior. To date, such recursive Bayesian updates have not been used for linkage in a streaming setting. Our proposed model is constructed under the Fellegi-Sunter paradigm, which entails pairwise comparisons of records \citep{fellegi1969theory, sadinle2013generalized}. We explore diffuse and informative prior distributions and provide two streaming samplers.

The remainder of this paper proceeds as follows. Section \ref{sec:bayesian-record-linkage-model} defines the Bayesian record linkage model for streaming data and defines the problem context, notation, assumptions, and constraints for the model. Section \ref{sec:streaming-sampling} introduces two streaming samplers which can be used to perform updates of parameter estimates upon the arrival of a new file. Section \ref{sec:simulation-study} evaluates these methods on both the quality of samples they produce as well as their speed on simulated data sets. Section \ref{sec:poland} provides the result of performing streaming record linkage on real-world survey panel data. Section \ref{sec:discussion} contains discussion of further advantages and disadvantages of each streaming update method.

\hypertarget{sec:bayesian-record-linkage-model}{%
\section{Bayesian Record Linkage Model for Streaming Data}\label{sec:bayesian-record-linkage-model}}

We will begin this section with a description of the streaming data context, definition of notation, and enumeration of assumptions. We then define the likelihood and prior specification for the multi-file record linkage model.

\hypertarget{streaming-record-linkage-notation}{%
\subsection{Streaming Record Linkage Notation}\label{streaming-record-linkage-notation}}

We consider \(k\) files \(X_1, \dots, X_k\) that are collected temporally, so that file \(X_m\) is available at time \(T_m\), with \(T_1 < T_2 < \dots < T_k\). See Figure \ref{fig:streaming-files} in Appendix \ref{sec:supplemental-figures} for a diagram depicting this context. Each file \(X_m\) contains \(n_m \geq 1\) records \(X_m = \{\boldsymbol x_{mi}\}_{i = 1}^{n_m}\), with each \(n_m\) potentially distinct. Each record is comprised of \(p_m\) fields and it is assumed that there is a common set of \(F\) fields numbered \(f=1,\dots,F\) across the \(k\) files which can be numeric, text or categorical. Records representing an individual (or entity) can be noisily duplicated across files. Each individual or entity is recorded at most once in each file, corresponding to an assumption that there are no duplicates within a file. This setting has a growing complexity--- with \(k\) files, all records in \(k(k-1)/2\) pairs of files must be compared and linked. The goal of the record linkage problem is identifying which records in files \(X_1, \dots, X_k\) refer to the same entities. This context is considered ``streaming'' because data is continuously generated with no predetermined stopping point and our goal is to update the inference pipeline as new information becomes available.

Our record linkage model for the streaming data context extends the ideas of \citet{fellegi1969theory} and \citet{sadinle2017bayesian}. Within this paradigm, the comparisons are assumed to come from one of two distributions, \(\cal M\) for coreferent pairs and \(\cal U\) for non-coreferent pairs. Two records are coreferent if they refer to the same entity. The \citet{fellegi1969theory} framework was extended to the Bayesian paradigm for two-file record linkage in \citet{sadinle2017bayesian}. In this work, we further extend the model for a general \(k\)-file scenario. In contrast to the \citet{aleshinguendel2021multifile} model which also extends the \citet{sadinle2017bayesian} model for the multi-file case, we parameterize the record matching as vectors linking to the most recent previous occurrence of an individual and place an informative prior on these vectors to avoid overlinking between files. This parameterization is the mechanism by which streaming updates are possible.

We denote comparison between two records, \(\boldsymbol x_{m_1 i}\) in file \(X_{m_1}\) and \(\boldsymbol x_{m_2 j}\) in file \(X_{m_2}\), as a function, \(\gamma(\boldsymbol x_{m_1 i}, \boldsymbol x_{m_2 j})\), which compares the values in each field, \(f\), dependent on field type. Each comparison results in discrete levels \(0,\dots, L_f\) with 0 representing exact equality and subsequent levels representing increased difference. For example, categorical values can be compared in a binary fashion, numerical fields can be compared by binned absolute difference, and text fields can be compared by binned Levenshtein distance \citep{christen2012data}. We define \(P = \sum_{f=1}^F (L_f+1)\), as the total number of levels of disagreement of all fields. The comparison \(\gamma(\boldsymbol x_{m_1 i}, \boldsymbol x_{m_2 j})\) takes the form of a \(P\)-vector of binary indicators containing \(F\) ones and \(P-F\) zeros which indicates the level of disagreement between \(\boldsymbol x_{m_1 i}\) and \(\boldsymbol x_{m_2 j}\) in each field. Exactly one 1 must appear in the first \(L_1+1\) elements of \(\gamma(\boldsymbol x_{m_1 i}, \boldsymbol x_{m_2 j})\), one 1 in the next \(L_2+1\) elements, and so on. The comparison vectors are collected into matrices \(\Gamma^{(1)}, \dots, \Gamma^{(k-1)}\) where \(\Gamma^{(m-1)}\) contains all comparisons between the records in file \(X_m\) and previous files. The comparison matrix \(\Gamma^{(m-1)}\) has \(n_m\cdot (n_1 + \dots + n_{m-1})\) rows and \(P\) columns. Define \(\Gamma^{(1:m)}\) as \(\{\Gamma^{(1)}, \dots, \Gamma^{(m)} \}\) for \(m \in 1, \dots, k - 1\).

Records can be represented as a \(k\)-partite graph, with nodes representing records in each file and a link between two records indicating that they are coreferent. This graph can be segmented according to the order of files. First, a bipartite graph between \(X_1\) and \(X_2\); then a tripartite graph between \(X_1, X_2\), and \(X_3\), where records in \(X_3\) link to records in \(X_1\) and \(X_2\); until finally a \(k\)-partite graph between \(X_1,\dots,X_k\) where records in \(X_k\) link to records in \(X_1,\dots,X_{k-1}\). These graphs can be represented with \(k-1\) matching vectors, with one vector per file \(X_2, \dots, X_k\). Each vector, denoted \(\boldsymbol Z^{(m-1)}\), has length \(n_m\) with the value in index \(j\), denoted \(Z^{(m-1)}_j\), corresponding to the record \(\boldsymbol x_{mj}\) as follows,
\[
Z^{(m-1)}_j = \begin{cases}
\sum_{\ell=1}^{t-1} n_\ell + i & \parbox[t]{.60\textwidth}{for $t < m$, if $\boldsymbol x_{ti} \in X_t$ and $\boldsymbol x_{mj}$ are coreferent,} \\ 
\sum_{\ell=1}^{m-1} n_\ell +j & \text{otherwise.}
\end{cases}
\]

Let \(\boldsymbol Z^{(m-1)} = \left(Z^{(m-1)}_j\right)_{j=1}^{n_{m}}\) and \(\boldsymbol Z^{(1:m)} = \left\{\boldsymbol Z^{(1)},\dots,\boldsymbol Z^{(m)}\right\}\) for \(m \in 1, \dots, k - 1\). These vectors identify which records are coreferent and are therefore the main parameters of interest in the record linkage problem.

We also define parameters \(\boldsymbol m\) and \(\boldsymbol u\), which specify the distributions \(\cal M\) and \(\cal U\) respectively. Both \(\boldsymbol m\) and \(\boldsymbol u\) are \(P\)-vectors which can be separated into the sub-vectors \(\boldsymbol m = \begin{bmatrix}\boldsymbol m_1 & \dots & \boldsymbol m_F\end{bmatrix}\) and \(\boldsymbol u = \begin{bmatrix}\boldsymbol u_1 & \dots & \boldsymbol u_F\end{bmatrix}\), where \(\boldsymbol m_f\) and \(\boldsymbol u_f\) have length \(L_f+1\). Then \({\cal M}(\boldsymbol m) = \prod_{f=1}^F \text{Multinomial}(1; \boldsymbol m_f)\) and \({\cal U}(\boldsymbol u) = \prod_{f=1}^F \text{Multinomial}(1; \boldsymbol u_f)\) are the distributions for matches and non-matches, respectively.

\hypertarget{sec:preserving-transitivity}{%
\subsection{Preserving the Duplicate-Free File Assumption}\label{sec:preserving-transitivity}}

Preserving the assumption of duplicate-free files with a large number of files is a challenge because the combination of several links throughout the parameters \(\boldsymbol Z^{(1:(k-1))}\) may imply that two records in the same file are coreferent. For example if \(Z^{(1)}_1 = 1\), \(Z^{(2)}_1 = 1\), and \(Z^{(2)}_2 = n_1 + 1\), then the records \(\boldsymbol x_{31}\) and \(\boldsymbol x_{32}\) are implied to be coreferent even though they are not directly linked to the same record. We address this by placing constraints on the values of these parameters such that no two records may link directly to the same record in a previous file. Because each record can send at most one link to a previous record and receive at most one link from a later record, we guarantee that no two records in the same file are transitively linked. Figure \ref{fig:valid-links} depicts a three-file example of prohibited and allowed values of \(\boldsymbol Z^{(1)}\) and \(\boldsymbol Z^{(2)}\). Both values are logically equivalent, but without this constraint the prohibited configuration could allow for one record in file \(X_4\) to link to record \(\boldsymbol x_{31}\) while another links to record \(\boldsymbol x_{21}\), becoming coreferent and violating the assumption.

The bipartite matching, \(\boldsymbol Z^{(1)}\), is constrained in a manner consistent with \citet{sadinle2017bayesian}. Namely, that there can be no two \(Z^{(1)}_i = Z^{(1)}_{i'}\) where \(i \neq i'\). The tripartite matching must be similarly restricted to enforce our link validity constraint. Specifically, for some \(1 \leq i \leq n_3\) and \(1 \leq j \leq n_1\), \(Z^{(2)}_i\) cannot equal \(j\) if \(Z^{(1)}_{k} = j\) for any \(k \leq n_2\). That is, record \(i\) cannot be linked to a record \(j\) in \(X_1\) which already has a match in \(X_2\). To enforce transitivity of the coreference relationship, comparisons with files \(X_m, m \geq 3\) will be constrained.

\begin{definition}
\protect\hypertarget{def:link-validity}{}\label{def:link-validity}\textbf{Link Validity Constraint.} Let \(\mathcal{C}_k\) be the set of all matching vectors \(\boldsymbol Z^{(1:(k-1))}\) such that every record \(\boldsymbol x_{m_1 i}\) receives at most one link from a record \(\boldsymbol x_{m_2 j}\) where \(m_2 > m_1\). That is, there is at most one value in any \(\boldsymbol Z^{(m_2-1)}\) with \(m_2 > m_1\) that equals \(\sum_{\ell=1}^{m_1 - 1} n_\ell + i\). Matching vectors \(\boldsymbol Z^{(1:(k-1))}\) are valid if and only if \(\boldsymbol Z^{(1:(k-1))} \in \mathcal{C}_k\).
\end{definition}

\begin{figure}

{\centering \includegraphics[width=0.45\linewidth]{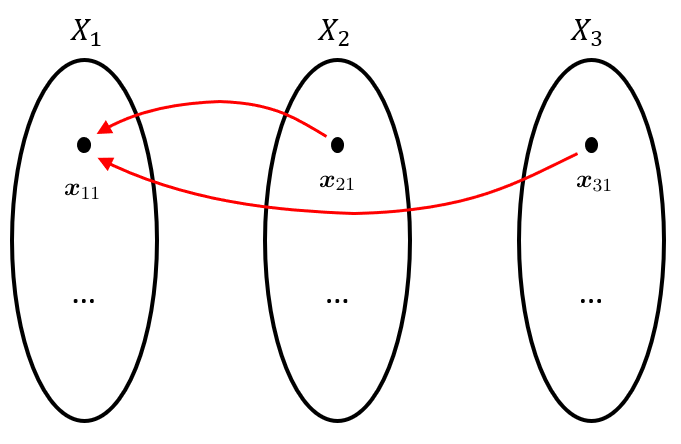} \includegraphics[width=0.45\linewidth]{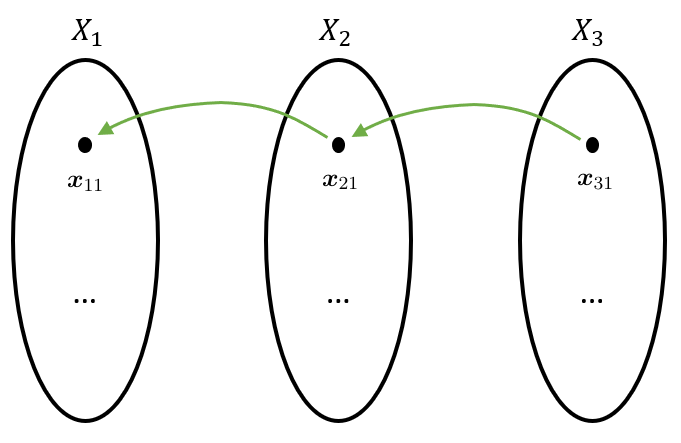} 

}

\caption{Examples of both prohibited (left) and allowed (right) links between records in three files. On the left $Z^{(1)}_1 = 1$ and $Z^{(2)}_1 = 1$, while on the right $Z^{(1)}_1 = 1$ and $Z^{(2)}_1 = n_1 + 1$. The left configuration is prohibited because the record $\boldsymbol x_{11}$ receives a link from both $\boldsymbol x_{21}$ and $\boldsymbol x_{31}$. Both configurations define the same  cluster containing these three records.}\label{fig:valid-links}
\end{figure}

This constraint aids in the identifiability of the parameters \(\boldsymbol Z^{(1:(k-1))}\). Under these constraints each logical cluster of at most one record from each file has one unique valid representation, namely a chain of links from the latest-appearing record to the earliest-appearing record, linking records in order of appearance. The chain nature aids in computation--- it becomes possible to list all members of a cluster by starting at one of its members and traversing the chain forwards and backwards without needing to branch or double back.

\hypertarget{sec:streaming-likelihood}{%
\subsection{Likelihood}\label{sec:streaming-likelihood}}

In this section and Section \ref{sec:prior-specification}, we define the likelihood and priors that contribute to the streaming record linkage model posterior. The full posterior distribution is presented in Appendix \ref{sec:posterior-distribution}. Consistent with the formulation in \citet{sadinle2017bayesian}, the likelihood for the two-file case is defined as
\[P(\Gamma^{(1)} | \boldsymbol Z^{(1)}, \boldsymbol m, \boldsymbol u) = \prod_{i=1}^{n_1} \prod_{j=1}^{n_2} P(\gamma_{ij} | \boldsymbol Z^{(1)}, \boldsymbol m, \boldsymbol u) = \prod_{i=1}^{n_1} \prod_{j=1}^{n_2} \prod_{f=1}^F \prod_{\ell=0}^{L_f} \left[m_{f\ell}^{\mathbb{I}(Z^{(1)}_j=i)} u_{f\ell}^{\mathbb{I}(Z^{(1)}_j\neq i)} \right]^{\gamma_{ij}^{f\ell}},\]
where \(\gamma_{ij} := \gamma(\boldsymbol x_{1i}, \boldsymbol x_{2j})\), \(\gamma_{ij}^{f\ell}\) is the component corresponding to level \(\ell\) of field \(f\), and \(\mathbb{I}(\cdot)\) is the indicator function taking a value of 1 if its argument holds and 0 otherwise. For every pair of records, one from each file, \(\boldsymbol m\) contributes to the distribution if the records are linked by \(\boldsymbol Z^{(1)}\) and \(\boldsymbol u\) contributes otherwise. We extend this to the \(k\)-file case by defining the match set,
\(M := M(\boldsymbol Z^{(1:(k-1))}) = \{(\boldsymbol x_{m_1 i}, \boldsymbol x_{m_2 j}): \boldsymbol x_{m_1 i}\text{ and }\boldsymbol x_{m_2 j}\text{ are linked}\},\)
to contain all pairs of records that are linked either directly or transitively through a combination of multiple vectors \(\boldsymbol Z^{(1:(k-1))}\). Testing whether \((\boldsymbol x_{m_1 i}, \boldsymbol x_{m_2 j}) \in M\) for \(m_1 < m_2\) is done by the process of \emph{link tracing}. This is the process by which we determine the links implied by transitivity in the match vectors. To perform link tracing, we start at \(\boldsymbol x_{m_2 j}\) and follow the values in \(\boldsymbol Z^{(1:(k-1))}\) to travel down the chain of links, starting with \(Z^{(m_2-1)}_j\). If \(\boldsymbol x_{m_1 j}\) is ever reached, then \((\boldsymbol x_{m_1 i}, \boldsymbol x_{m_2 j}) \in M\), while if a dead end is reached first, then \((\boldsymbol x_{m_1 i}, \boldsymbol x_{m_2 j}) \notin M\).

The full data model in the \(k\)-file case is then
\begin{equation}
P(\Gamma^{(1:(k-1))} | \boldsymbol m, \boldsymbol u, \boldsymbol Z^{(1:(k-1))}) = \prod_{m_1 < m_2}^k \prod_{i=1}^{n_{m_1}} \prod_{j=1}^{n_{m_2}} \prod_{f=1}^F \prod_{\ell=0}^{L_f} \left[m_{f\ell}^{\mathbb{I}((\boldsymbol x_{m_1 i}, \boldsymbol x_{m_2 j}) \in M)} u_{f\ell}^{\mathbb{I}((\boldsymbol x_{m_1 i}, \boldsymbol x_{m_2 j}) \notin M)} \right]^{\gamma^{f\ell}(\boldsymbol x_{m_1 i}, \boldsymbol x_{m_2 j})}. \label{eqn:streaming-data-model}
\end{equation}
The likelihood of the \(k\)-file Bayesian record linkage model encodes the assumption that all comparisons, \(\Gamma\), are conditionally independent given the parameters \(\boldsymbol m, \boldsymbol u, \boldsymbol Z^{(1:(k-1))}\). The same \(\boldsymbol m\) and \(\boldsymbol u\) probabilities appear in the distribution of comparisons between each pair of files, corresponding to an assumption of equal propensity for error in each file. Alternatively separate probabilities, \(\boldsymbol m_{t_1t_2}\) and \(\boldsymbol u_{t_1t_2}\), can be specified for the comparisons between files \(X_{t_1}\) and \(X_{t_2}\), as in \citet{aleshinguendel2021multifile}. However, every new file, \(X_k\), will require \(2(k-1)\) new parameters, \(\boldsymbol m_{t_1 k}\) and \(\boldsymbol u_{t_1 k}\) for all \(t_1 < k\), which may affect the model's computational performance as well as the ability of a sampler to adequately explore the space in a streaming setting. The support of the data distribution is dependent on the vectors \(\boldsymbol Z^{(1:(k-1))}\), specifically, the matching vectors must satisfy the link validity constraint given in Definition \ref{def:link-validity}. We explicitly write this constraint as an indicator function in the likelihood:
\begin{equation}
L(\boldsymbol m, \boldsymbol u, \boldsymbol Z^{(1:(k-1))}) = \mathbb{I}(\boldsymbol Z^{(1:(k-1))} \in \mathcal{C}_k)\cdot P(\Gamma^{(1:(k-1))} | \boldsymbol m, \boldsymbol u, \boldsymbol Z^{(1:(k-1))}). \label{eqn:streaming-likelihood}
\end{equation}

\hypertarget{sec:prior-specification}{%
\subsection{Prior Specification}\label{sec:prior-specification}}

\hypertarget{sec:m-u-prior}{%
\subsubsection{\texorpdfstring{Priors for \(\boldsymbol m\) and \(\boldsymbol u\)}{Priors for \textbackslash boldsymbol m and \textbackslash boldsymbol u}}\label{sec:m-u-prior}}

The parameters \(\boldsymbol m\) and \(\boldsymbol u\) are probabilities of a multinomial distribution, so we specify conjugate Dirichlet priors. Specifically, we let \(\boldsymbol m_f \sim \text{Dirichlet}(\boldsymbol a_f)\) and \(\boldsymbol u_f \sim \text{Dirichlet}(\boldsymbol b_f)\), for \(f=1,\dots,F\), where \(\boldsymbol a_f\) and \(\boldsymbol b_f\) are vectors with the same dimension, \(L_f+1\), as \(\boldsymbol m_f\) and \(\boldsymbol u_f\). For a diffuse prior we can set \(\boldsymbol a = \boldsymbol b = \boldsymbol 1\). Also it can be useful to encode prior knowledge about the propensity for duplicates to have errors in the prior for \(\boldsymbol m\). For example, if we know that an error in field \(f\) of a duplicated record has probability \(p\) of occurring, we can let
\begin{equation}
\boldsymbol a_f = s\cdot\begin{bmatrix}1-p & p/L_f & \dots & p/L_f\end{bmatrix}, \label{eqn:m-prior-construction}
\end{equation}
with \(s\) determining the strength of the prior knowledge. We empirically investigate the effect of this informative prior specification on \(\boldsymbol m\) in simulated data scenarios in Section \ref{sec:simulation-study}.

\hypertarget{sec:streaming-prior}{%
\subsubsection{\texorpdfstring{Priors for \(\boldsymbol Z^{(k-1)}\)}{Priors for \textbackslash boldsymbol Z\^{}\{(k-1)\}}}\label{sec:streaming-prior}}

We construct the prior for streaming matching vectors using the same hierarchy as specified in \citet{sadinle2017bayesian}. First, let \(w_j^{(k)} := \mathbb{I}\left(Z^{(k-1)}_j \leq \sum_{m=1}^{k-1} n_m\right)\), that is let \(w_j^{(k)}\) be an indicator that record \(j\) in file \(k\) is linked, and \(\boldsymbol w^{(k)} = \left\{w_j^{(k)}: j = 1, \dots, n_k\right\}\). Then to specify the prior for \(\boldsymbol Z^{(k-1)}\), we let
\begin{align}
\left. w_j^{(k)} \middle| \pi \right. &\stackrel{\text{iid}}{\sim} \text{Bernoulli}(\pi) \notag \\
\left. \boldsymbol Z^{(k-1)} \middle| \boldsymbol w^{(k)}\right. &\sim \text{Uniform}\left(\left\{\text{all valid $k$-partite matchings}\right\}\right). \label{eqn:z-prior}
\end{align}
Allowing \(\pi \sim \text{Beta}(\alpha_\pi, \beta_\pi)\) results in the marginal streaming prior
\begin{equation*}
P(\boldsymbol Z^{(k-1)}|\alpha_\pi,\beta_\pi) = \frac{(N - n_{k\cdot}(\boldsymbol Z^{(k-1)}))!}{N!}\cdot\frac{\mbox{B}(n_{k\cdot}(\boldsymbol Z^{(k-1)}) + \alpha_\pi, n_k - n_{k\cdot}(\boldsymbol Z^{(k-1)}) + \beta_\pi)}{\mbox{B}(\alpha_\pi, \beta_\pi)}, 
\end{equation*}
where \(N = \sum_{m=1}^{k-1} n_m\) and \(n_{k\cdot}(\boldsymbol Z^{(k-1)}) = \sum_{j=1}^{n_k} I(Z^{(k-1)}_j \leq N)\)

This streaming prior enforces the condition that no two records within the same file can link to the same record in a previous file. However, the more general link validity constraint in Definition \ref{def:link-validity} is not enforced in the prior. It is possible to enforce this constraint in the prior (rather than the likelihood) by specifying the prior as \(P(\boldsymbol Z^{(k-1)}|\alpha_\pi,\beta_\pi)\mathbb{I}(\boldsymbol Z^{(1:(k-1))} \in \mathcal{C}_k)\) with no effect on the posterior. By not enforcing the link validity constraint in Equation \ref{eqn:z-prior}, the resulting marginal streaming prior depends on \(N\), rather than on the number of records available to be linked based on \(\boldsymbol Z^{1:(k-2)}\). In empirical studies this decision has resulted in higher accuracy in the linkage. Further exploration of this prior specification is the subject of future research.

\hypertarget{sec:streaming-sampling}{%
\section{Streaming Sampling}\label{sec:streaming-sampling}}

The key to Bayesian streaming record linkage is an efficient means of updating the posterior distribution of existing parameters after the arrival of a new file, \(X_k\). In this section, we introduce two sampling approaches we have adapted to address this problem, Prior-Proposal-Recursive-Bayes (PPRB) and Sequential MCMC (SMCMC).

\hypertarget{prior-proposal-recursive-bayes-pprb}{%
\subsection{Prior-Proposal-Recursive Bayes (PPRB)}\label{prior-proposal-recursive-bayes-pprb}}

Prior-Proposal-Recursive Bayes is a recursive Bayesian sampling technique in which existing posterior samples from a previous stage are used as independent Metropolis proposals to sample from a later stage posterior distribution, conditioned on new data \citep{hooten2019making}. We consider a model with parameters \(\boldsymbol \theta\) and data \(\boldsymbol y_1, \boldsymbol y_2\):
\begin{equation*}
\boldsymbol y = \begin{bmatrix}\boldsymbol y_1 \\ \boldsymbol y_2 \end{bmatrix} \sim p(\boldsymbol y | \boldsymbol \theta) = p(\boldsymbol y_1 | \boldsymbol \theta)p(\boldsymbol y_2 | \boldsymbol \theta, \boldsymbol y_1),\ \ \boldsymbol \theta \sim p(\boldsymbol \theta)
\end{equation*}

We assume \(\boldsymbol y_1\) arrives before \(\boldsymbol y_2\) and posterior samples \(\boldsymbol \theta_{(1)}\dots \boldsymbol \theta_{(S)}\) are obtained from \(p(\boldsymbol \theta | \boldsymbol y_1)\). After \(\boldsymbol y_2\) arrives, these samples are resampled as independent Metropolis proposals for the updated posterior distribution \(p(\boldsymbol \theta | \boldsymbol y_1, \boldsymbol y_2)\). The acceptance ratio \(\alpha\) for the proposal \(\boldsymbol \theta'\) and current value \(\boldsymbol \theta\) simplifies to
\[\alpha = \min\left(\frac{p(\boldsymbol y_2 | \boldsymbol \theta', \boldsymbol y_1)}{p(\boldsymbol y_2 | \boldsymbol \theta, \boldsymbol y_1)},1\right).\]
This ratio depends only on the full conditional distribution of the new data, \(\boldsymbol y_2\) and so can be calculated quickly. If \(\boldsymbol y_2\) and \(\boldsymbol y_1\) are conditionally independent given \(\boldsymbol \theta\), then the old data \(\boldsymbol y_1\) does not need to be stored in order to calculate \(\alpha\) or perform PPRB.

To apply PPRB to the Bayesian record linkage model when a file \(X_k\) arrives, we have \(\boldsymbol y_2 = \Gamma^{(k-1)}\), \(\boldsymbol y_1 = \Gamma^{(1:(k-2))}\), and \(\boldsymbol \theta = \begin{bmatrix}\boldsymbol m & \boldsymbol u & \boldsymbol Z^{(1:(k-2))} \end{bmatrix}\). Since all comparisons are assumed conditionally independent given the parameters \(\boldsymbol m, \boldsymbol u, \boldsymbol Z^{(1:(k-2))}\), the past calculated comparisons \(\Gamma^{(1:(k-2))}\) would not be needed to calculate \(\alpha\) or perform PPRB. However, the streaming record linkage model requires additional parameters, \(\boldsymbol Z^{(k-1)}\), for the distribution of the new data, \(\Gamma^{(k-1)}\), so a straight forward application of PPRB is not possible. \citet{hooten2019making} propose drawing values of the new parameter from its predictive distribution and appending those values to the existing samples prior to PPRB, which retains the simplified form of the acceptance ratio, \(\alpha\). In the streaming record linkage problem, the predictive distribution of \(\boldsymbol Z^{(k-1)}\) reduces to its prior: \(p(\boldsymbol Z^{(k-1)} | \boldsymbol m, \boldsymbol u, \boldsymbol Z^{(1:(k-2))}, \Gamma^{(1:(k-2))}) = p(\boldsymbol Z^{(k-1)})\). However, because the space of possible values of \(\boldsymbol Z^{(k-1)}\) is on the order of \((\sum_{\ell=1}^{k-1} n_\ell)^{n_k}\) and the proposed prior is diffuse, these values are rarely good proposals for the updated posterior distribution, leading to low acceptance rates and slow mixing.

For this reason, we propose PPRB-within-Gibbs, a Gibbs sampler in which one of the steps is an independent Metropolis proposal from prior stage posterior samples.

\begin{definition}
\protect\hypertarget{def:pprb-within-gibbs}{}\label{def:pprb-within-gibbs}\textbf{PPRB-within-Gibbs algorithm}. Consider a general model with partitioned data \(\boldsymbol y_1, \boldsymbol y_2\), and parameters \(\boldsymbol \theta_1, \boldsymbol \theta_2, \boldsymbol \theta_3\):
\begin{align*}
\boldsymbol y_1 | \boldsymbol \theta_1, \boldsymbol \theta_2 &\sim p(\boldsymbol y_1 | \boldsymbol \theta_1, \boldsymbol \theta_2) \\
\boldsymbol y_2 | \boldsymbol \theta_1, \boldsymbol \theta_2, \boldsymbol \theta_3 &\sim p(\boldsymbol y_2 | \boldsymbol \theta_1, \boldsymbol \theta_2, \boldsymbol \theta_3) \\
\boldsymbol \theta_1 \sim p(\boldsymbol \theta_1),\ \boldsymbol \theta_2 &\sim p(\boldsymbol \theta_2),\ \boldsymbol \theta_3 \sim p(\boldsymbol \theta_3)
\end{align*}

The parameters \(\boldsymbol \theta_1, \boldsymbol \theta_2, \boldsymbol \theta_3\) have independent priors, \(\boldsymbol y_1, \boldsymbol y_2\) are conditionally independent given the parameters, and the first wave of data, \(\boldsymbol y_1\), is not dependent on \(\boldsymbol \theta_3\). Let there be existing posterior samples, \(\{\boldsymbol \theta_1^s\}_{s=1}^S\) from the distribution \(p(\boldsymbol \theta_1 | \boldsymbol y_1)\). Then for the desired number of posterior samples,

\begin{enumerate}
\def\labelenumi{\arabic{enumi}.}
\tightlist
\item
  Update the parameter \(\boldsymbol \theta_2\) from the full conditional distribution \([\boldsymbol \theta_2 | \boldsymbol \theta_1, \boldsymbol \theta_3, \boldsymbol y_1, \boldsymbol y_2]\),
\item
  (PPRB step) Propose a new value \(\boldsymbol \theta_1^\ast\) by drawing from the existing posterior samples \(\{\boldsymbol \theta_1^s\}_{s=1}^S\) with replacement. Accept or reject the proposal using the Metropolis-Hastings ratio \[\alpha = \min\left(\frac{p(\boldsymbol y_2 | \boldsymbol \theta_1^\ast, \boldsymbol \theta_2, \boldsymbol \theta_3)}{p(\boldsymbol y_2 | \boldsymbol \theta_1, \boldsymbol \theta_2, \boldsymbol \theta_3)}\frac{p(\boldsymbol \theta_2 | \boldsymbol \theta_1^\ast, \boldsymbol y_1)}{p(\boldsymbol \theta_2 | \boldsymbol \theta_1, \boldsymbol y_1)}, 1\right),\]
\item
  Update the parameter \(\boldsymbol \theta_3\) from the full conditional distribution \([\boldsymbol \theta_3 | \boldsymbol \theta_1, \boldsymbol \theta_2, \boldsymbol y_1, \boldsymbol y_2]\),
\end{enumerate}

recording the values of \(\boldsymbol \theta_1\), \(\boldsymbol \theta_2\), and \(\boldsymbol \theta_3\) at the end of each iteration.
\end{definition}

\begin{theorem}
\protect\hypertarget{thm:pprb-within-gibbs-sampling}{}\label{thm:pprb-within-gibbs-sampling}The PPRB-within-Gibbs sampler (Definition \ref{def:pprb-within-gibbs}) produces an ergodic Markov chain with the model's posterior distribution as its target distribution if the posterior distribution satisfies the following positivity condition,
\[p(\boldsymbol \theta_1 | \boldsymbol y_1, \boldsymbol y_2) > 0,\ p(\boldsymbol \theta_2 | \boldsymbol y_1, \boldsymbol y_2) > 0,\ p(\boldsymbol \theta_3 | \boldsymbol y_1, \boldsymbol y_2) > 0 \implies p(\boldsymbol \theta_1, \boldsymbol \theta_2, \boldsymbol \theta_3 | \boldsymbol y_1, \boldsymbol y_2) > 0.\]
\end{theorem}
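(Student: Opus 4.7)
The plan is to decompose the PPRB-within-Gibbs transition kernel as a composition of three sub-kernels, $K = K_3 \circ K_2 \circ K_1$, and then (i) verify that each sub-kernel leaves the target posterior $\pi(\boldsymbol\theta_1, \boldsymbol\theta_2, \boldsymbol\theta_3) := p(\boldsymbol\theta_1, \boldsymbol\theta_2, \boldsymbol\theta_3 \mid \boldsymbol y_1, \boldsymbol y_2)$ invariant, and (ii) use the stated positivity condition to upgrade invariance to ergodicity. Invariance of $\pi$ under a composition follows from invariance under each factor, so it suffices to analyze the three steps separately.

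For the two Gibbs updates $K_1$ (sampling $\boldsymbol\theta_2$) and $K_3$ (sampling $\boldsymbol\theta_3$), the argument is textbook: each sub-kernel draws from the exact full conditional of $\pi$, hence preserves $\pi$ by the standard detailed-balance calculation for Gibbs kernels. No model-specific work is required beyond noting that the conditional independence structure specified in Definition \ref{def:pprb-within-gibbs} guarantees these full conditionals are well-defined.

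The substantive step is to verify invariance under the PPRB sub-kernel $K_2$. The proposal is the independence sampler $q(\boldsymbol\theta_1^\ast \mid \boldsymbol\theta_1, \boldsymbol\theta_2, \boldsymbol\theta_3) = p(\boldsymbol\theta_1^\ast \mid \boldsymbol y_1)$, so it suffices to check detailed balance for the correct Metropolis--Hastings acceptance ratio against the full conditional $\pi(\boldsymbol\theta_1 \mid \boldsymbol\theta_2, \boldsymbol\theta_3)$. The plan is to write
\[
\frac{\pi(\boldsymbol\theta_1^\ast \mid \boldsymbol\theta_2, \boldsymbol\theta_3)}{\pi(\boldsymbol\theta_1 \mid \boldsymbol\theta_2, \boldsymbol\theta_3)} \cdot \frac{p(\boldsymbol\theta_1 \mid \boldsymbol y_1)}{p(\boldsymbol\theta_1^\ast \mid \boldsymbol y_1)},
\]
expand each factor via the joint $p(\boldsymbol y_1, \boldsymbol y_2, \boldsymbol\theta_1, \boldsymbol\theta_2, \boldsymbol\theta_3)$, and use the identity $p(\boldsymbol\theta_1 \mid \boldsymbol y_1) = p(\boldsymbol y_1 \mid \boldsymbol\theta_1, \boldsymbol\theta_2)\,p(\boldsymbol\theta_1)\,p(\boldsymbol\theta_2)\,/\,\{p(\boldsymbol y_1)\,p(\boldsymbol\theta_2 \mid \boldsymbol\theta_1, \boldsymbol y_1)\}$ (derived from $p(\boldsymbol\theta_1, \boldsymbol\theta_2 \mid \boldsymbol y_1) = p(\boldsymbol\theta_1 \mid \boldsymbol y_1)\,p(\boldsymbol\theta_2 \mid \boldsymbol\theta_1, \boldsymbol y_1)$) to cancel the $p(\boldsymbol y_1 \mid \cdot)$ and $p(\boldsymbol\theta_1)$ factors. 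What remains is exactly the ratio in Definition \ref{def:pprb-within-gibbs}, so the algorithm implements a proper independence M--H update and therefore preserves $\pi(\boldsymbol\theta_1 \mid \boldsymbol\theta_2, \boldsymbol\theta_3)$, and hence $\pi$ itself. This algebraic reduction is where I expect the main obstacle: keeping careful track of which conditional densities integrate out $\boldsymbol\theta_2$ versus which condition on it, and correctly invoking the model assumption that $\boldsymbol y_1$ does not depend on $\boldsymbol\theta_3$.

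For ergodicity, the positivity condition is precisely the Hammersley--Clifford hypothesis that allows the full conditionals to pin down the joint. Under this hypothesis the Gibbs portion of the scan is $\pi$-irreducible on the support of $\pi$; the PPRB step cannot break irreducibility because its proposal distribution $p(\boldsymbol\theta_1 \mid \boldsymbol y_1)$ has positive density on every $\boldsymbol\theta_1$ with $p(\boldsymbol\theta_1) > 0$, which includes the $\boldsymbol\theta_1$-support of $\pi$, so the acceptance probability is strictly positive on that support. Aperiodicity follows from the positive rejection probability of the M--H step, which gives a non-trivial atom at the current state. Combining invariance and ergodicity yields the claimed convergence to the posterior. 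I would close by remarking briefly on the practical caveat that the algorithm uses the empirical distribution of stored draws rather than $p(\boldsymbol\theta_1 \mid \boldsymbol y_1)$ itself, noting that the proof applies to the idealized proposal and that the approximation improves as the stored sample size grows.
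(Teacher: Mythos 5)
Your proposal is correct and follows essentially the same route as the paper's proof: verify that the stated acceptance ratio is the proper independence Metropolis--Hastings ratio for the full conditional of $\boldsymbol\theta_1$ by expanding through the joint and cancelling via $p(\boldsymbol\theta_1, \boldsymbol\theta_2 \mid \boldsymbol y_1) = p(\boldsymbol\theta_1 \mid \boldsymbol y_1)\,p(\boldsymbol\theta_2 \mid \boldsymbol\theta_1, \boldsymbol y_1)$, confirm the proposal's support covers the target's, and invoke the positivity condition for irreducibility of the Gibbs scan. Your added remarks on aperiodicity and on the empirical pool of stored draws are sensible supplements but do not change the argument.
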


\begin{proof}
See Appendix \ref{sec:sampler-definitions-theorems}.
\end{proof}

\(S\) is the number of samples drawn from the previous posterior distribution, \(p(\boldsymbol \theta_1 | \boldsymbol y_1)\) and generally cannot be increased. As the pool of samples, \(\{\boldsymbol \theta_1^s\}_{s=1}^S\), approximates the distribution \(p(\boldsymbol \theta_1 | \boldsymbol y_1)\) for the purpose of proposals, a larger \(S\) will lead to better proposals. However, we see in Section \ref{sec:pprb-degeneracy} that the pool of samples available to PPRB or PPRB-within-Gibbs will degrade over time after repeated applications in a streaming setting. A large \(S\) can extend the utility of the pool but will not keep it from degrading. We briefly mention future work that could address this degradation in Section \ref{sec:discussion}.

PPRB-within-Gibbs is applicable to the streaming record linkage model via the relationships \(\boldsymbol \theta_1 = \boldsymbol Z^{(1:(k-2))},\ \boldsymbol \theta_2 = [\boldsymbol m, \boldsymbol u],\ \boldsymbol \theta_3 = \boldsymbol Z^{(k-1)},\ \boldsymbol y_1 = \Gamma^{(1:(k-2))},\ \boldsymbol y_2 = \Gamma^{(k-1)},\) which satisfies all the preconditions of the algorithm. The algorithm steps for the streaming record linkage model as defined in Section \ref{sec:bayesian-record-linkage-model} are listed in Appendix \ref{sec:pprb-within-gibbs-streamingrl}. The acceptance ratio, \(\alpha\), is now the product of two ratios. The first ratio is of the data distribution of new data, as in original PPRB, evaluated both at the proposed \(\boldsymbol Z_\ast^{(1:(k-1))}\) and the current \(\boldsymbol Z^{(1:(k-1))}\). The second ratio is of the full conditional density of \(\boldsymbol m\) and \(\boldsymbol u\), but only conditioned on the pre-arrived data and pre-existing parameters. As such, these values can be pre-calculated for every existing posterior sample from the previous stage posterior.

This approach retains the appealing speed and low storage requirements of PPRB by utilizing existing posterior samples, while also avoiding an identified challenge of the original method proposed by \citet{hooten2019making} by drawing from the full conditional distribution of \(\boldsymbol Z^{(k-1)}\) rather than its prior. However, as resampling filtering methods, PPRB and PPRB-within-Gibbs can never sample states of any \(\boldsymbol Z^{(m)}, m < k\) not present in the first pool of posterior samples of that parameter. As a result, the pool of samples for any \(\boldsymbol Z^{(m)}\) will converge to a degenerate distribution as \(k \to \infty\) \citep{lunn2013fully}. We see evidence in Section \ref{sec:pprb-degeneracy} and discuss potential ways to address this in Section \ref{sec:discussion}.

\hypertarget{sequential-mcmc-smcmc}{%
\subsection{Sequential MCMC (SMCMC)}\label{sequential-mcmc-smcmc}}

Sequential MCMC is a sampling algorithm based on parallel sequential approximation \citep{yang2013sequential}. Starting from an existing ensemble of posterior samples from \(P(\boldsymbol m, \boldsymbol u, \boldsymbol Z^{(1:(k-2))} | \Gamma^{(1:(k-2))})\), SMCMC uses two kernels:

\begin{enumerate}
\def\labelenumi{\arabic{enumi}.}
\tightlist
\item
  The Jumping Kernel --- a probability distribution \(J(\boldsymbol Z^{(k-1)} | \cdot)\) which is responsible for initializing a value of \(\boldsymbol Z^{(k-1)}\) for each sample, potentially conditioning on old or new data.
\item
  The Transition Kernel --- any MCMC kernel, \(T\), that targets the updated posterior distribution, \(P(\boldsymbol m, \boldsymbol u, \boldsymbol Z^{(1:(k-1))} | \Gamma^{(1:(k-1))})\).
\end{enumerate}

These kernels are applied in parallel as initialized at each existing sample, first using the jumping kernel to initialize \(\boldsymbol Z^{(k-1)}\) and then repeatedly applying the transition kernel \(T\) until desired convergence is achieved. Final states of each parallel chain are taken as the new ensemble. SMCMC is a massively parallel MCMC algorithm that is expected to have fast convergence if the posterior based on new data and the posterior based on current data are similar in shape. Both jumping and transition kernels may depend on previously arrived data as well as new data. For Bayesian multi-file record linkage, we choose the transition kernel \(T\) as a Gibbs-style kernel which updates all parameters in sequence, and the jumping kernel \(J\) to be the full conditional update of \(\boldsymbol Z^{(k-1)}\).

SMCMC differs from PPRB-within-Gibbs in that it operates on an independent ensemble of samples. If the initial size of the ensemble is \(S\), SMCMC produces \(S\) \emph{independent} samples from the updated posterior distribution by nature of the parallel algorithm. Therefore the ensemble can remain relatively small, and only a small number of posterior draws need to be saved after the arrival of each file. The ensemble is never filtered, so converging to a degenerate distribution is not a concern for SMCMC. The transition kernel within SMCMC updates all parameters and maintains the same speed as MCMC for the updated posterior using the full data. The speed benefits of SMCMC then come from the ability to use as many as \(S\) parallel chains with well-chosen initial values. By contrast, PPRB-within-Gibbs's speed benefits come from simplifying the parameter update step. Unlike PPRB-within-Gibbs, SMCMC requires the full data be stored in perpetuity because with every new file the transition kernel will update all parameters.

\hypertarget{sec:z-proposals}{%
\subsection{Proposals for Matching Vector Updates}\label{sec:z-proposals}}

Both streaming samplers, PPRB-within-Gibbs and SMCMC, depend on full conditional updates of matching vectors. Step 3 of PPRB-within-Gibbs and the jumping kernel from SMCMC are both full conditional updates of the most recent vector \(\boldsymbol Z^{(k-1)}\), and the transition kernel of SMCMC must update all matching vectors. The choice of update is crucial for both speed and convergence of the sampler.

A straight-forward method for performing updates of \(\boldsymbol Z^{(k-1)}\) is to update each component \(Z^{(k-1)}_j\) in turn for \(j=1,\dots,n_k\). This method is used by \citet{sadinle2017bayesian} to update the matching vector in the two-file Bayesian record linkage model. The support for each component \(Z^{(k-1)}_j\) is enumerable as \(\{1, \dots, \sum_{\ell=1}^{k-1} n_\ell, \sum_{\ell=1}^{k-1} n_\ell + j\}\). To draw from the full conditional distribution of each \(Z^{(k-1)}_j\), the product of the likelihood and priors is evaluated for each potential value, normalized, and used as probabilities to sample the new value. The full transition kernel using these component-wise proposals for matching vectors is defined in Definition \ref{def:gibbs-sampler-component} in Appendix \ref{sec:sampler-definitions-theorems}.

\citet{zanella2020informed} describes a class of locally balanced pointwise informed proposals distributions to improve sampling in high-dimensional discrete spaces. For a sample space \({\cal X}\) with a target distribution given by \(\pi(x)\), these proposals have the form,
\[Q_g(x, y) = \frac{g\left(\frac{\pi(y)}{\pi(x)}\right)K(x,y)}{Z_g(x)},\]
where the proposed move is from a point \(x\) to a point \(y\). \(K(x,y)\) is a symmetric uninformed local proposal distribution, \(g: \mathbb{R}^+ \to \mathbb{R}^+\) is a function and \(Z_g(x)\) is the normalizing constant. The goal of these proposals is to improve the uninformed proposal \(K\) by biasing towards points with higher probability through the multiplicative term \(g(\pi(y)/\pi(x))\). The uninformed kernel \(K\) is arbitrary, and \(Q_g\) is called \emph{locally balanced} if and only if \(g(t) = tg(1/t)\). A consequence of this property of \(g\) along with a symmetric local proposal \(K\) is that the Metropolis-Hastings acceptance ratio for locally balanced proposals simplifies to the ratio of normalizing constants, \(\min(Z_g(x)/Z_g(y), 1)\).

To apply locally balanced proposals to the Bayesian multi-file record linkage model, we choose \(g(t) = t/(1+t)\) and \(K\) to be the kernel defined by making a single randomly chosen add, delete, swap, or double-swap move. The kernel \(K\) can optionally be blocked, where first a subset of records in file \(X_k\) and an equally sized subset of records in files \(X_1, \dots, X_{k-1}\) are randomly selected and then, only moves which affect links between these subsets are considered. Blocking limits the scope of possible moves for each update, which in turn decreases the time required per update. However, blocking also increases the chance of proposing a move to a lower probability state which is more likely to be rejected, requiring more updates to sample effectively. We use a block size in Section \ref{sec:simulation-study} which is fast while still producing many accepted proposals. The full transition kernel using these locally balanced proposals for matching vectors is defined in Definition \ref{def:gibbs-sampler-lb} in Appendix \ref{sec:sampler-definitions-theorems}.

The component-wise full conditional updates can take larger steps than the locally balanced proposals because each value in \(\boldsymbol Z^{(k-1)}\) has the potential to be updated. In contrast, the locally balanced proposals can at most update two components of \(\boldsymbol Z^{(k-1)}\) with a double-swap operation. The component-wise full conditional updates, however, are more computationally intensive as the likelihood needs to be calculated at more potential states and there is no option for blocking. We use locally balanced proposals to update \(\boldsymbol Z^{(k-1)}\) in PPRB-within-Gibbs. In Section \ref{sec:simulation-study} both locally balanced and component-wise proposals are used within SMCMC and their speed and sampling performance are compared.

\hypertarget{sec:simulation-study}{%
\section{Simulation Study}\label{sec:simulation-study}}

To assess both the performance of the model and speed of the streaming update, we evaluate our Bayesian multi-file record linkage model and both streaming samplers on simulated data. We choose to focus on the four file case, since the arrival of the fourth file is the earliest point at which two sequential streaming updates can have been used, demonstrating the potential for use in streaming settings.

\hypertarget{sec:data-simulation}{%
\subsection{Data Simulation}\label{sec:data-simulation}}

Data were simulated using the GeCo software package \citep{tran2013geco} which creates realistic simulated data about individuals. Each record was given 10 fields: first name, last name, occupation, and age, plus 6 categorical fields with values drawn uniformly from 12 possible categories. For each of four levels of overlap (10\%, 30\%, 50\%, and 90\%), four files of 200 records each were created. Duplicate records were allowed in consecutive and non-consecutive datasets. In each duplicated record in files \(X_2\), \(X_3\) and \(X_4\), a maximum of either 2, 4, or 6 errors were inserted. Errors were inserted into text fields of first name and last name by simulating typos, common misspellings, and OCR errors using the GeCo package \citep{tran2013geco}. Errors were inserted into the remaining categorical fields by replacing their value with a category selected randomly uniform from all possible categories. Each field could have errors, with text fields more likely than categorical fields. A total of 12 datasets were created, one at each combination of error and overlap. This simulation is intended to mimic a longitudinal survey in which we have demographic information and the answers to 6 identifying categorical questions with varying levels of noise and overlap. Comparison vectors were created by comparing each field between pairs of records. Text fields were compared using binned normalized Levenshtein distance with 4 levels: exact equality, \((0,0.25]\), \((0.25,0.5]\), and \((0.5, 1]\). Categorical fields were compared in a binary fashion. All computation in this section and in Section \ref{sec:poland} was performed using the RMACC Summit Supercomputer \citep{anderson2017deploying}. We utilized the accompanying package \texttt{bstrl} \citep{pkgbstrl} on \texttt{R} version 3.5.0 on Intel Haswell CPUs with 24 cores and 4.84 GB of memory per CPU.

\hypertarget{sec:simulation-link-accuracy}{%
\subsection{Link Accuracy}\label{sec:simulation-link-accuracy}}

We assess the accuracy of our multi-file record linkage model by evaluating samples from the posterior distribution obtained using a non-streaming Gibbs sampler. The streaming samplers should target the same posterior distribution as the Gibbs sampler, thus we present a comparison on model performance alone. We compare the streaming samplers on runtime in Section \ref{sec:simulation-sampler-speed}. We use three strengths of prior on the parameter \(\boldsymbol m\). For the diffuse prior (Flat), we set \(\boldsymbol a=\begin{bmatrix}1&\cdots &1\end{bmatrix}\). Then for weakly informed (Weak) and strongly informed (Strong) priors, we use Equation \ref{eqn:m-prior-construction} to determine \(\boldsymbol a\). We use \(s=12\) for the weakly informed prior and \(s=120\) for the strongly informed prior. In both the weakly and strongly informed priors, \(p=1/2\) for string fields and \(p=1/8\) for categorical fields. These values of \(p\) reflect a prior probability of error of \(1/2\) in string fields and \(1/8\) in categorical fields, and an average of 2 errors per record. For comparison, we evaluate the multi-file Bayesian linkage model of \citet{aleshinguendel2021multifile} as implemented in the \texttt{multilink} package (Multilink), the empirically motivated Bayesian entity resolution model of \citet{steorts2015entity} as implemented in the \texttt{blink} package (Blink), and a semi-supervised Fellegi-Sunter model with support vector machine used to classify links as implemented in the \texttt{RecordLinkage} package (SVM) with 1\% of the record pairs used as training data. Multilink is similar to our proposed model in that it is a Bayesian multi-file Fellegi-Sunter extension. However, it differs from ours in that it is based on a partitioning prior and does not enable streaming data. We have included both the recommended separate likelihoods, which models comparisons differently for each pair of files, and a single likelihood version (Single Likelihood), which is more analogous to the model presented in Section \ref{sec:streaming-likelihood}. Blink and SVM are both deduplication models, and so may link records within the same file. Where possible, we have chosen default or recommended values for tuning and hyperparameters in these comparison models. Further details about the comparison models can be found in Appendix \ref{sec:link-accuracy-model-details}.

We compare the accuracy of the resulting links by examining the posterior distribution of the \(F_1\)-score, \(F_1 = 2(\text{recall}^{-1} + \text{precision}^{-1})^{-1}\) \citep{blair1979information}. Recall is the proportion of true coreferent record pairs that are correctly identified, and precision is the proportion of identified coreferent pairs that are true duplicates. Table \ref{tab:f1score} shows these posterior distributions as means and standard deviations of posterior samples drawn from each model, after discarding burn-in. We also evaluate the models through the posterior distribution of the number of estimated distinct entities across all files in Figure \ref{fig:entityerrors}. Because the SVM may result in non-transitive links, we consider only the accuracy of the link labels for this method rather than number of estimated entities. The model presented in this paper performs as well or better than the comparison models using both metrics. Additional error levels are included in the supplemental material.

Overall the link accuracy of our model is comparable to the comparison models. In all but one case (90\% overlap and 6 errors) our proposed model has the highest \(F_1\)-score, and in that case our model's \(F_1\)-score is close to the best-performing comparison model. As expected, performance is generally worse for all models in scenarios with fewer duplicates and more errors in the duplicates. We would hesitate to generalize these comparison results to other scenarios, particularly because two comparison models (Blink, SVM) allow for duplicates within files which are not present in this simulated data. Additionally, the SVM method relies on having training data, which is not always available and expensive to produce, while the proposed model is fully unsupervised. With higher amounts of error and low overlap, the strength of the prior on \(\boldsymbol m\) can be used to compensate for a lack of clean identifying information. We see in these cases, that the Strong Prior model outperforms the Weak and Flat Prior models, even though the strong prior is slightly misspecified for higher error cases. Similar prior information may be provided for the other Bayesian comparison models (Blink, Multilink), which may also improve their performance in these more difficult cases. Each Bayesian model was run using 3 different random seeds and all exhibited some multimodality in higher overlap cases where links are more constrained, particularly those with duplicate-free file constraints (Streaming, Multilink).

\newcommand{\originalarraystretch}{}
\let\originalarraystretch\arraystretch
\renewcommand{\arraystretch}{0.80}
\begin{table}

\caption{\label{tab:f1score}Posterior means and standard deviations of $F_1$-score for simulated datasets. Within rows, each model is listed: the model presented in this paper (Streaming) and three comparison models. Larger values represent more accurate links in the posterior distribution. The support vector machine, a non-bayesian method, is represented only by the $F_1$-score of its resulting point estimate.}
\centering
\fontsize{10}{12}\selectfont
\begin{tabular}[t]{lllll}
\toprule
\begingroup\fontsize{10}{12}\selectfont Model\endgroup & \begingroup\fontsize{10}{12}\selectfont 10\% overlap\endgroup & \begingroup\fontsize{10}{12}\selectfont 30\% overlap\endgroup & \begingroup\fontsize{10}{12}\selectfont 50\% overlap\endgroup & \begingroup\fontsize{10}{12}\selectfont 90\% overlap\endgroup\\
\midrule
\addlinespace[0.3em]
\multicolumn{5}{l}{\textbf{Errors: 2}}\\
\hspace{1em}Streaming (Flat Prior) & \textbf{0.992 (0.0054)} & \textbf{1.000 (0.0009)} & 0.991 (0.0018) & 0.990 (0.0000)\\
\hspace{1em}Streaming (Weak Prior) & \textbf{0.992 (0.0056)} & \textbf{1.000 (0.0009)} & \textbf{0.999 (0.0015)} & \textbf{1.000 (0.0000)}\\
\hspace{1em}Streaming (Strong Prior) & 0.978 (0.0102) & 0.999 (0.0022) & 0.994 (0.0020) & \textbf{1.000 (0.0000)}\\
\hspace{1em}Multilink & 0.985 (0.0089) & 0.996 (0.0041) & 0.985 (0.0019) & 0.944 (0.0000)\\
\hspace{1em}Multilink (Single Likelihood) & 0.991 (0.0047) & 0.999 (0.0016) & 0.994 (0.0015) & 0.992 (0.0000)\\
\hspace{1em}Blink & 0.578 (0.0165) & 0.974 (0.0021) & 0.993 (0.0005) & 0.996 (0.0004)\\
\hspace{1em}SVM (1\% training) & 0.962 & \textbf{1.000} & 0.986 & 0.999\\
\addlinespace[0.3em]
\multicolumn{5}{l}{\textbf{Errors: 4}}\\
\hspace{1em}Streaming (Flat Prior) & 0.979 (0.0123) & 0.957 (0.0067) & 0.974 (0.0036) & 0.997 (0.0001)\\
\hspace{1em}Streaming (Weak Prior) & \textbf{0.981 (0.0107)} & 0.971 (0.0072) & \textbf{0.986 (0.0034)} & \textbf{0.998 (0.0001)}\\
\hspace{1em}Streaming (Strong Prior) & 0.978 (0.0101) & \textbf{0.976 (0.0052)} & \textbf{0.986 (0.0036)} & \textbf{0.998 (0.0001)}\\
\hspace{1em}Multilink & 0.161 (0.0038) & 0.640 (0.0402) & 0.982 (0.0048) & 0.978 (0.0015)\\
\hspace{1em}Multilink (Single Likelihood) & 0.913 (0.0283) & 0.960 (0.0092) & 0.983 (0.0035) & 0.997 (0.0004)\\
\hspace{1em}Blink & 0.504 (0.0117) & 0.887 (0.0065) & 0.962 (0.0043) & 0.994 (0.0011)\\
\hspace{1em}SVM (1\% training) & 0.933 & 0.827 & 0.919 & 0.947\\
\addlinespace[0.3em]
\multicolumn{5}{l}{\textbf{Errors: 6}}\\
\hspace{1em}Streaming (Flat Prior) & 0.227 (0.0073) & 0.797 (0.0200) & 0.952 (0.0071) & 0.993 (0.0016)\\
\hspace{1em}Streaming (Weak Prior) & 0.808 (0.0592) & 0.910 (0.0157) & \textbf{0.954 (0.0065)} & 0.977 (0.0011)\\
\hspace{1em}Streaming (Strong Prior) & \textbf{0.896 (0.0180)} & \textbf{0.929 (0.0103)} & 0.952 (0.0054) & 0.983 (0.0012)\\
\hspace{1em}Multilink & 0.064 (0.0013) & 0.482 (0.0118) & 0.822 (0.0263) & 0.985 (0.0017)\\
\hspace{1em}Multilink (Single Likelihood) & 0.064 (0.0021) & 0.393 (0.0151) & 0.913 (0.0147) & \textbf{0.997 (0.0012)}\\
\hspace{1em}Blink & 0.456 (0.0127) & 0.803 (0.0092) & 0.910 (0.0058) & 0.986 (0.0022)\\
\hspace{1em}SVM (1\% training) & 0.674 & 0.668 & 0.707 & 0.675\\
\bottomrule
\end{tabular}
\end{table}
\renewcommand{\arraystretch}{\originalarraystretch}

\begin{figure}
\centering
\includegraphics{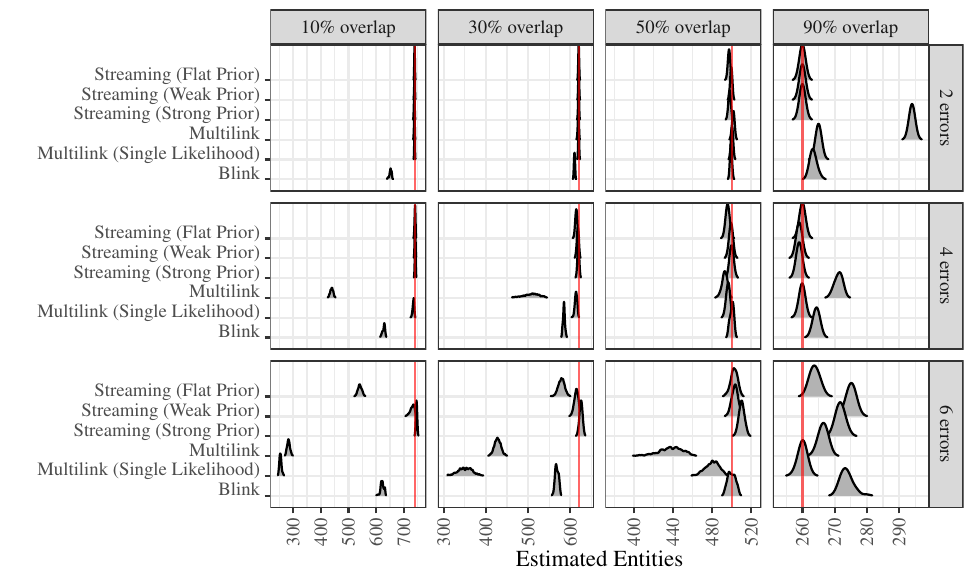}
\caption{\label{fig:entityerrors}Posterior distribution of the number of estimated entities for simulated datasets. A vertical line indicates the true number of distinct entities in each dataset. Compared models are on the y-axis: the model presented in this paper (Streaming) and three comparison models.}
\end{figure}

\hypertarget{sec:simulation-sampler-speed}{%
\subsection{Speed}\label{sec:simulation-sampler-speed}}

\begin{figure}
\centering
\includegraphics{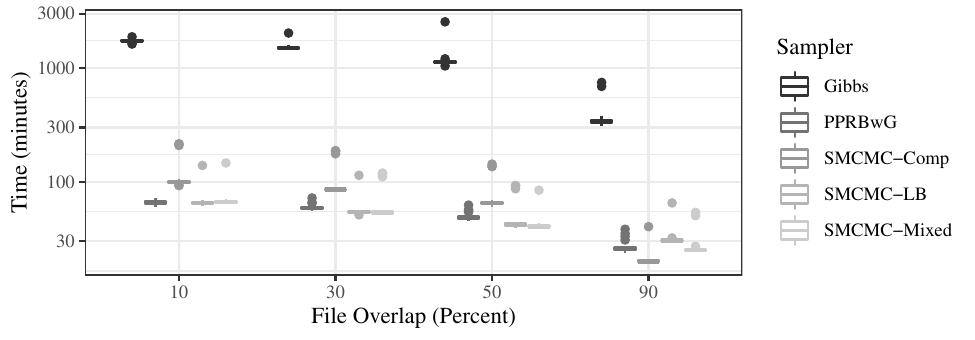}
\caption{\label{fig:esstime}Time required for each sampler to produce an effective sample size of 1000. The effective sample size is measured on the continuous parameters, \(\boldsymbol m\) and \(\boldsymbol u\). Lower values indicate more efficient sampling. The SMCMC sampling time is estimated assuming 1000 available cores so that each ensemble member can be updated in parallel.}
\end{figure}

Our streaming samplers from Section \ref{sec:streaming-sampling} more efficiently produce samples from the model's posterior distribution. We demonstrate this improved efficiency by recording the amount of time required by each sampler to produce an effective sample size of 1000. For each of the 16 simulated data sets, five samplers were used to sample from the posterior distribution of \(\boldsymbol m, \boldsymbol u, \boldsymbol Z^{(1)}, \boldsymbol Z^{(2)}, \boldsymbol Z^{(3)} | \Gamma^{(1)}, \Gamma^{(2)}, \Gamma^{(3)}\). We compared PPRB-within-Gibbs using locally balanced \(\boldsymbol Z^{(3)}\) updates (PPRBwG), SMCMC with locally balanced proposals for both jumping and transition kernels (SMCMC-LB), SMCMC with component-wise full conditional draws for both jumping and transition kernels (SMCMC-Comp), SMCMC with component-wise full conditional draws for the jumping kernel and locally balanced proposals for the transition kernel (SMCMC-Mixed), and a non-streaming Gibbs sampler fit to the full data using the sampler in Definition \ref{def:gibbs-sampler-component} in Appendix \ref{sec:sampler-definitions-theorems} (Gibbs). All streaming samplers used the \texttt{BRL} package \citep{sadinle2017bayesian} to sample from the bipartite record linkage posterior distribution, \(\boldsymbol m, \boldsymbol u, \boldsymbol Z^{(1)} | \Gamma^{(1)}\). More details about these simulations are in Appendix \ref{sec:speed-comparison-sampler-details}.

We choose effective sample size to capture both the number of samples produced in a given time and their quality. To summarize the effective sample size of each run, we calculate the effective sample size of each component of the continuous parameters \(\boldsymbol m\) and \(\boldsymbol u\), and find the median across all values. Since SMCMC produces independent samples, the effective sample size of any parameter is equal to the size of the SMCMC ensemble. The three SMCMC methods are assumed to be run fully parallel, where the samples produced are not limited by time but by available computational resources. The streaming samplers take an order of magnitude less time to obtain 1000 effective samples than the non-streaming sampler (Figure \ref{fig:esstime}). With fewer cores available the time advantage for SMCMC will not be as stark, however there is still a benefit with as few as 36 cores.

As the number of records in each file, \(n_1, \dots, n_k\), grows, the time required by each component-wise \(\boldsymbol Z^{(k-1)}\) full conditional update will grow quadratically because it iterates through every combination of a record in file \(X_k\) and a record in all previous files, \(n_k \cdot \sum_{\ell=1}^{k-1} n_\ell\) total pairs of records. This will affect the time of any sampler using component-wise full conditional updates. The time for locally balanced proposals, if blocked, does not grow with the number of records per file. However the smaller the block size becomes relative to the file size, the less effective blocked locally balanced proposals will be at exploring the parameter space. As the number of files, \(k\), grows, the time required by each component-wise \(\boldsymbol Z^{(k-1)}\) full conditional update will grow linearly since the number of records in file \(k\) does not increase, only the total number of records in previous files. The time for locally balanced proposals, if blocked, does not grow with the number of files. A growing number of files will also increase the time required by SMCMC as more full conditional updates will be required per iteration of the transition kernel. The time required for the transition kernel will grow at most quadratically with increasing \(k\) because a linear series of new full conditional updates are required which are themselves require at most linearly increasing time with \(k\). As \(k\) increases, the amount of time required for PPRB-within-Gibbs is not affected unless using component-wise full conditional updates for \(\boldsymbol Z^{(k-1)}\) or also increasing the locally balanced proposal block-size.

\hypertarget{sec:pprb-degeneracy}{%
\subsection{PPRB Degeneracy}\label{sec:pprb-degeneracy}}

\begin{figure}
\centering
\includegraphics{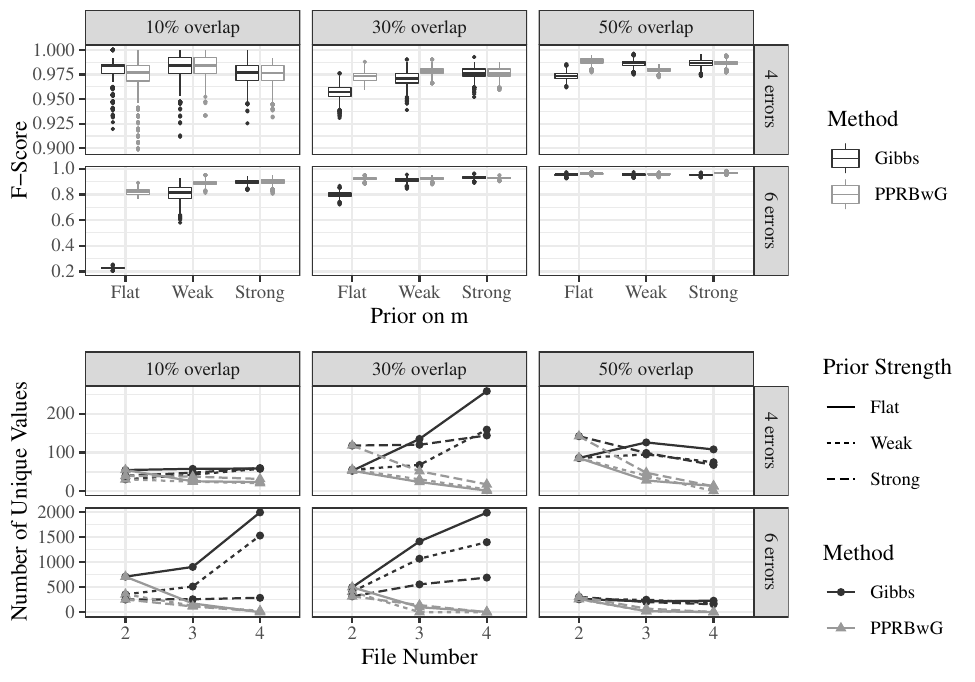}
\caption{\label{fig:pprbdegen}Demonstrations of PPRB-within-Gibbs sample degredation. The data scenarios in which the difference in distinct values is most visible are shown. TOP: Posterior F1-score for PPRB-within-Gibbs and non-streaming samplers. On the x-axis are different strengths of prior distribution on the parameter \(\boldsymbol m\). In some datasets, PPRB-within-Gibbs appears to produce different posterior distributions than the non-streaming sampler. BOTTOM: Number of distinct values of \(\boldsymbol Z^{(1)}\) produced by PPRB-within-Gibbs and Gibbs, out of 2000 iterations. Lines connect points with the same prior information for \(\boldsymbol m\).}
\end{figure}

As is true of all filtering methods, PPRB and PPRB-within-Gibbs have the undesirable property that the pool of samples for any \(\boldsymbol Z^{(m)}\) will converge to a degenerate distribution as \(k \to \infty\). We see an example of this phenomenon in Figure \ref{fig:pprbdegen}, particularly for overlaps of 50\% or less and 4 or more errors, where the posterior distribution of \(F_1\)-score from PPRB-within-Gibbs differs from the other samplers. For 10\% overlap, 4 errors, and a flat prior on \(\boldsymbol m\), we even see a very large difference between PPRB-within-Gibbs and the non-streaming sampler. To investigate further, we compare the samples produced from PPRB-within-Gibbs to those produced from a non-streaming (Gibbs) sampler. In 4-file record linkage, PPRB-within-Gibbs produces noticeably fewer unique values of \(\boldsymbol Z^{(1)}\) than Gibbs for the same number of posterior samples. This indicates that degradation is occurring due to the filtering of the initial pool of samples from two sequential PPRB-within-Gibbs updates. As more files are added and the pool of \(\boldsymbol Z^{(1)}\) samples is further filtered, this contrast will become more apparent, eventually leading to a single value of \(\boldsymbol Z^{(1)}\) being sampled.

\hypertarget{sec:poland}{%
\section{Real Data Application}\label{sec:poland}}

We now apply streaming record linkage to a sample of records from a longitudinal survey with a known true identity for each record. The Social Diagnosis Survey (SDS) of quality of life in Poland \citep{diagnoza} is a biennial survey of households that was first conducted in the year 2000. Individuals may be recorded multiple times in separate years but there is no duplication of individuals within a year. Four files of data were selected from the full dataset from the years 2007 through 2013. The four files have varying sizes, with \(n_1=151\), \(n_2=464\), \(n_3=688\), and \(n_4=677\), for a total of 1980 records. The files were created by randomly sampling, without replacement, 910 individuals from all individuals appearing in at least one of the included years. Of the 910 individuals, 306 appear in just one file, 240 appear in two files, 262 appear in three files, and 102 appear in all four files.

Linkage was performed using six fields: gender, province, educational attainment, and year, month, and day of birth. All fields are categorical and were compared using binary comparisons. We chose hyperparameters to produce flat priors in \(\boldsymbol m\), \(\boldsymbol u\), and \(\boldsymbol Z^{(\ell)}\) for \(\ell=1,2,3\). We compared five samplers: a non-streaming Gibbs sampler (Gibbs), sequentially applied PPRB-within-Gibbs updates with locally balanced proposals (PPRBwG), and sequentially applied SMCMC updates with component-wise proposals (SMCMC-Comp), locally balanced proposals (SMCMC-LB) or a mix using component-wise jumping kernel proposals and locally balanced transition kernel proposals (SMCMC-Mixed). More details of the MCMC runs can be found in Appendix \ref{sec:poland-mcmc-details}.

\begin{table}

\caption{\label{tab:poland-accuracy}Posterior means and standard deviations of $F_1$-score and estimated number of entities, and total sampling time, for the four-file Poland SDS data set using five samplers. There are 910 true entities in the four files. Sampling time is given in cumulative hours required to produce posterior samples of the parameters conditioned first on three files, then on four files using each sampling method. The SMCMC sampling time is estimated assuming 1000 available cores so that each ensemble member can be updated in parallel.}
\centering
\begin{tabular}[t]{lllr}
\toprule
Sampler & F1-Score & Estimated Entities & Sampling Time\\
\midrule
Gibbs & 0.985  (9e-04) & 915  (1.6) & 121.1\\
PPRBwG & 0.992  (0.0010) & 915  (2.0) & 10.9\\
SMCMC-Comp & 0.992  (0.0012) & 916  (1.9) & 3.5\\
SMCMC-LB & 0.99  (0.0022) & 916  (1.9) & 6.9\\
SMCMC-Mixed & 0.992  (0.0010) & 916  (1.8) & 3.5\\
\bottomrule
\end{tabular}
\end{table}

The streaming record linkage models were able to recover the true coreferent records with high accuracy. Table \ref{tab:poland-accuracy} shows the posterior \(F_1\)-score distribution for each of the 5 samplers, the posterior distribution of the estimated number of entities resulting from the linkage, and the time to generate the posterior samples. All samplers performed equally well at recovering the true coreferent record sets with a posterior mean \(F_1\)-score between 0.985 and 0.992. Streaming samplers were significantly faster than the non-streaming Gibbs sampler, with times given for the cumulative time required to produce both three-file and four-file inference using each sampling method. This is representative of the streaming data setting where inference is required after each new file arrives. The streaming samplers show between 11 times and 35 times speedup when compared to the non-streaming Gibbs sampler, where SMCMC time estimates are based on the assumption that enough cores are available for each ensemble to be run simultaneously in parallel.

\hypertarget{sec:discussion}{%
\section{Discussion}\label{sec:discussion}}

In this paper we have introduced a model for multi-file Bayesian record linkage based on the Fellegi-Sunter paradigm that is appropriate for streaming data contexts. We have shown this model to work as well as comparison models on realistic simulated data at varying amounts of duplication and error. With this work, we have proposed the first model-based streaming record linkage procedures that update inference on existing parameters and estimate new parameters as new data arrives. Our model provides interpretable parameters for estimating not only links between records, but the probability of different levels of error between fields of coreferent records. These streaming samplers allow for near-identical inference to the model fit using the full data. Having two distinct streaming options for this model allows for the selection of one based on the needs of the user, and we have detailed the trade-offs that one might consider. We have demonstrated that these streaming samplers can provide significant computational gains when compared to a Gibbs sampler using both simulated and real-world data.

Our simulation study shows a noticeable effect of the strength of the prior on \(\boldsymbol m\) on the accuracy of the resulting posterior samples. In Section \ref{sec:m-u-prior} we describe a way to use the prior on \(\boldsymbol m\) to incorporate prior knowledge about the probability of errors in duplicated fields, and in Section \ref{sec:simulation-link-accuracy} we suggest how this can be used to compensate for a lack of clean identifying fields in each record. The priors on \(\boldsymbol Z^{(1:(k-1))}\) can also be tuned through the values of \(\alpha_\pi\) and \(\beta_\pi\), but practitioners are unlikely to have prior knowledge about the level of overlap between files.

The scalability of this model to files with very large numbers of records could be limited in two ways. First, the dimension of the model's parameter space grows directly with the number of records included. A larger parameter space requires both larger storage for posterior samples and slower computation of the transition kernel. A very large file also poses difficulties for the computation of comparisons. With the arrival of a new file, a comparison vector needs to be computed comparing each record in the new file to each record in previous files. These challenges with large files could be mitigated by blocking to prohibit links across large time differences or breaking large files into several smaller files.

Future work in streaming record linkage includes relaxing the assumption of no duplicates within files to develop an entity resolution model that can identify duplicates both within and between files in a streaming context. Further streaming sampling methods may be explored by combining techniques of PPRB-within-Gibbs and SMCMC into a streaming sampler with more of the strengths of both methods: the ease of computation and low data storage demands of PPRB with the non-degenerate sampling of SMCMC. Additionally, more informative prior distribution selection for the linkage parameters is an area of future research.

\hypertarget{acknowledgements}{%
\section*{Acknowledgements}\label{acknowledgements}}
\addcontentsline{toc}{section}{Acknowledgements}

The authors acknowledge the support of the Laboratory for Analytic Sciences at North Carolina State University. B. Betancourt acknowledges the support of NSF DMS-2310222.

\hypertarget{supplemental-materials}{%
\section*{Supplemental Materials}\label{supplemental-materials}}
\addcontentsline{toc}{section}{Supplemental Materials}

All supplemental materials are contained in a single compressed (zipped) archive.

\begin{description}

\item[Appendix to ``Fast Bayesian Record Linkage for Streaming Data Contexts'':] Appendices that include supplemental tables and figures; posterior and full conditional distributions; supplemental definitions, theorems, and proofs; and simulation details. (streaming-record-linkage-appendix.pdf, PDF document)

\item[R-package for streaming record linkage:] R-package `bstrl`, implementing the streaming record linkage model and the PPRB-within-Gibbs and SMCMC streaming updates. (bstrl\_1.0.2.tar.gz, GNU zipped tar file)

\item[Reproducible code repository:] R code that can be used to reproduce the numerical results in this article, including tables and figures (streamingrl-reproducible-main.zip, compressed folder)

\end{description}

\bibliographystyle{Chicago}
\bibliography{refs.bib}


\clearpage

\hypertarget{appendix-appendix}{%
\appendix}

\hypertarget{sec:supplemental-figures}{%
\section{Supplemental Figures and Tables}\label{sec:supplemental-figures}}

Figure \ref{fig:streaming-files} depicts the streaming record linkage problem up to time \(T_k\).

Table \ref{tab:f1score-supp} and Figure \ref{fig:entityerrors-supp} show F1-scores and entity errors for additional error levels from the simulation in Section \ref{sec:simulation-study}.

\begin{figure}
\centering
\includegraphics{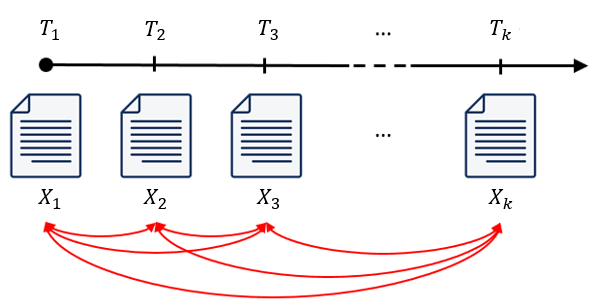}
\caption{\label{fig:streaming-files} A depiction of the streaming record linkage problem up to time \(T_k\). Files 1 through \(k\) arrive sequentially and are duplicate-free. The red arrows illustrate the growing complexity of the linkage problem on multiple files: with \(k\) files, records in \(k(k-1)/2\) pairs of files must be compared and linked.}
\end{figure}

\renewcommand{\arraystretch}{0.80}
\begin{table}

\caption{\label{tab:f1score-supp}Posterior means and standard deviations of $F_1$-score for simulated datasets. Within rows, each model is listed: the model presented in this paper (Streaming) and three comparison models. Larger values represent more accurate links in the posterior distribution. The support vector machine, a non-bayesian method, is represented only by the $F_1$-score of its resulting point estimate.}
\centering
\fontsize{10}{12}\selectfont
\begin{tabular}[t]{lllll}
\toprule
\begingroup\fontsize{10}{12}\selectfont Model\endgroup & \begingroup\fontsize{10}{12}\selectfont 10\% overlap\endgroup & \begingroup\fontsize{10}{12}\selectfont 30\% overlap\endgroup & \begingroup\fontsize{10}{12}\selectfont 50\% overlap\endgroup & \begingroup\fontsize{10}{12}\selectfont 90\% overlap\endgroup\\
\midrule
\addlinespace[0.3em]
\multicolumn{5}{l}{\textbf{Errors: 1}}\\
\hspace{1em}Streaming (Flat Prior) & 0.999 (0.0026) & 0.988 (0.0003) & \textbf{0.999 (0.0010)} & 0.998 (0.0000)\\
\hspace{1em}Streaming (Weak Prior) & 0.998 (0.0038) & \textbf{1.000 (0.0001)} & \textbf{0.999 (0.0012)} & \textbf{1.000 (0.0000)}\\
\hspace{1em}Streaming (Strong Prior) & 0.988 (0.0067) & 0.995 (0.0016) & 0.992 (0.0011) & \textbf{1.000 (0.0000)}\\
\hspace{1em}Multilink & 0.987 (0.0088) & 0.995 (0.0021) & 0.982 (0.0010) & 0.915 (0.0000)\\
\hspace{1em}Multilink (Single Likelihood) & 0.999 (0.0035) & 0.995 (0.0004) & 0.991 (0.0011) & 0.946 (0.0000)\\
\hspace{1em}Blink & 0.869 (0.0136) & 0.988 (0.0007) & \textbf{0.999 (0.0007)} & \textbf{1.000 (0.0000)}\\
\hspace{1em}SVM (1\% training) & \textbf{1.000} & 0.998 & 0.997 & \textbf{1.000}\\
\addlinespace[0.3em]
\multicolumn{5}{l}{\textbf{Errors: 3}}\\
\hspace{1em}Streaming (Flat Prior) & 0.955 (0.0195) & 0.990 (0.0058) & 0.987 (0.0021) & 0.995 (0.0001)\\
\hspace{1em}Streaming (Weak Prior) & 0.970 (0.0193) & 0.990 (0.0057) & \textbf{0.996 (0.0021)} & \textbf{1.000 (0.0001)}\\
\hspace{1em}Streaming (Strong Prior) & \textbf{0.978 (0.0159)} & 0.983 (0.0055) & \textbf{0.996 (0.0022)} & \textbf{1.000 (0.0001)}\\
\hspace{1em}Multilink & 0.095 (0.0055) & 0.981 (0.0059) & 0.983 (0.0027) & 0.954 (0.0000)\\
\hspace{1em}Multilink (Single Likelihood) & 0.940 (0.0210) & \textbf{0.991 (0.0052)} & 0.985 (0.0023) & \textbf{1.000 (0.0000)}\\
\hspace{1em}Blink & 0.543 (0.0176) & 0.944 (0.0031) & 0.988 (0.0023) & 0.999 (0.0002)\\
\hspace{1em}SVM (1\% training) & 0.933 & 0.958 & 0.984 & 0.974\\
\addlinespace[0.3em]
\multicolumn{5}{l}{\textbf{Errors: 8}}\\
\hspace{1em}Streaming (Flat Prior) & 0.231 (0.0077) & 0.414 (0.0093) & 0.822 (0.0163) & 0.950 (0.0031)\\
\hspace{1em}Streaming (Weak Prior) & 0.240 (0.0085) & 0.415 (0.0103) & 0.843 (0.0157) & 0.911 (0.0026)\\
\hspace{1em}Streaming (Strong Prior) & \textbf{0.710 (0.0277)} & \textbf{0.817 (0.0128)} & \textbf{0.898 (0.0075)} & 0.908 (0.0030)\\
\hspace{1em}Multilink & 0.204 (0.0084) & 0.372 (0.0084) & 0.647 (0.0097) & \textbf{0.977 (0.0032)}\\
\hspace{1em}Multilink (Single Likelihood) & 0.136 (0.0057) & 0.369 (0.0070) & 0.647 (0.0097) & 0.972 (0.0022)\\
\hspace{1em}Blink & 0.340 (0.0216) & 0.663 (0.0104) & 0.836 (0.0140) & 0.918 (0.0080)\\
\hspace{1em}SVM (1\% training) & 0.586 & 0.482 & 0.556 & 0.542\\
\bottomrule
\end{tabular}
\end{table}
\renewcommand{\arraystretch}{\originalarraystretch}

\begin{figure}
\centering
\includegraphics{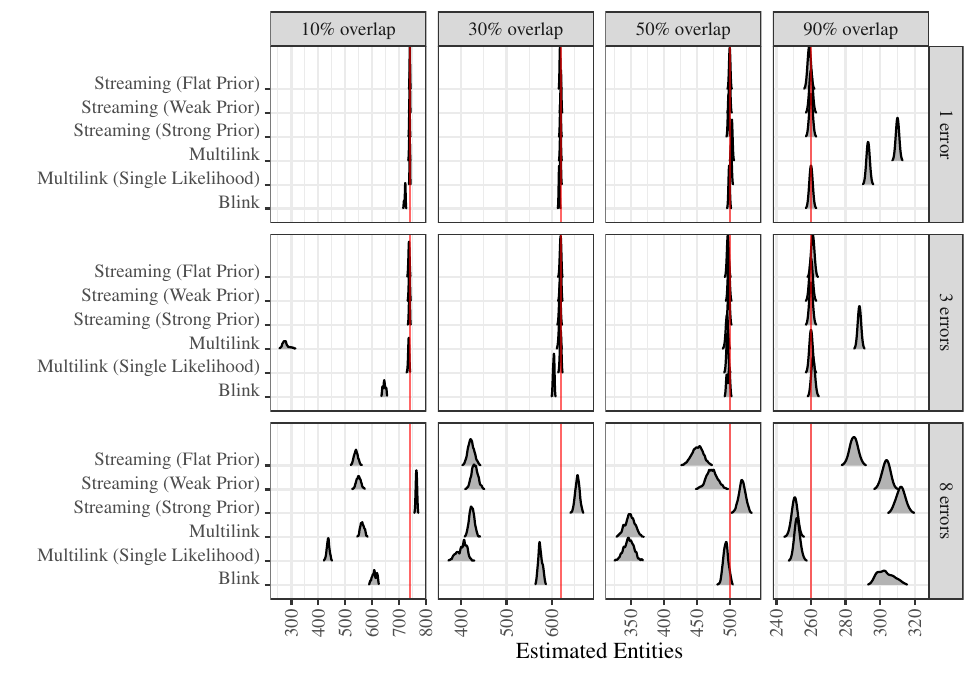}
\caption{\label{fig:entityerrors-supp}Posterior distribution of the number of estimated entities for simulated datasets. A vertical line indicates the true number of distinct entities in each dataset. Distributions to the right or left of the vertical line indicate underlinking or overlinking, respectively, in the posterior. Compared models are on the y-axis: the model presented in this paper (Streaming) and three comparison models.}
\end{figure}

\hypertarget{posterior-and-full-conditional-distributions}{%
\section{Posterior and Full Conditional Distributions}\label{posterior-and-full-conditional-distributions}}

\hypertarget{sec:posterior-distribution}{%
\subsection{Posterior Distribution}\label{sec:posterior-distribution}}

Here we specify a function that is proportional to the full streaming record linkage posterior density.

\begin{align}
&P(\boldsymbol m, \boldsymbol u, \boldsymbol Z^{(1)}, \dots, \boldsymbol Z^{(k-1)} | \Gamma^{(1)}, \dots, \Gamma^{(k-1)}) \\
&\quad\propto P(\boldsymbol m) P(\boldsymbol u) P(\boldsymbol Z^{(1)}) \cdots P(\boldsymbol Z^{(k-1)}) P(\Gamma^{(1)}, \dots, \Gamma^{(k-1)} | \boldsymbol m, \boldsymbol u, \boldsymbol Z^{(1)}, \dots, \boldsymbol Z^{(k-1)}) \\
&\quad\propto \prod_{f=1}^F\prod_{\ell=0}^{L_f}m_{f\ell}^{a_{f\ell}}u_{f\ell}^{b_{f\ell}} \nonumber \\
&\quad\quad\times \prod_{t=2}^{k} \left[ \frac{(N_{t-1} - n_{t\cdot}(\boldsymbol Z^{(t-1)}))!}{N_{t-1}!}\cdot\frac{\mbox{B}(n_{t\cdot}(\boldsymbol Z^{(t-1)}) + \alpha_\pi, n_t - n_{t\cdot}(\boldsymbol Z^{(t-1)}) + \beta_\pi)}{\mbox{B}(\alpha_\pi, \beta_\pi)} \right] \nonumber \\
&\quad\quad\times \prod_{t_1 < t_2}^k \prod_{i=1}^{n_{t_1}} \prod_{j=1}^{n_{t_2}} \prod_{f=1}^F \prod_{\ell=0}^{L_f} \left[m_{f\ell}^{\mathbb{I}((\boldsymbol x_{t_1 i}, \boldsymbol x_{t_2 j}) \in M)} u_{f\ell}^{\mathbb{I}((\boldsymbol x_{t_1 i}, \boldsymbol x_{t_2 j}) \notin M)} \right]^{\gamma^{f\ell}(\boldsymbol x_{t_1 i}, \boldsymbol x_{t_2 j})}, \label{eqn:posterior-distribution}
\end{align}
where
\begin{align*}
N_{t-1} &= n_1 + \dots + n_{t-1} \\
n_{t\cdot}(\boldsymbol Z^{(t-1)}) &= \sum_{j=1}^{n_t} \mathbb{I}(Z^{(t-1)}_j \leq N_{t-1}) \\
M &:= M(\boldsymbol Z^{(1)}, \dots, \boldsymbol Z^{(k-1)}) = \{(\boldsymbol x_{t_1 i}, \boldsymbol x_{t_2 j}): \boldsymbol x_{t_1 i}\text{ and }\boldsymbol x_{t_2 j}\text{ are linked}\}.
\end{align*}

\hypertarget{sec:full-conditional-distributions-m-u}{%
\subsection{\texorpdfstring{Full conditional for \(\boldsymbol m\), \(\boldsymbol u\)}{Full conditional for \textbackslash boldsymbol m, \textbackslash boldsymbol u}}\label{sec:full-conditional-distributions-m-u}}

We provide the full conditional distribution for \(\boldsymbol m\) starting from the posterior in Equation \ref{eqn:posterior-distribution}.

\begin{align}
&P(\boldsymbol m | \boldsymbol u, \boldsymbol Z^{(1)}, \dots, \boldsymbol Z^{(k-1)}, \Gamma^{(1)}, \dots, \Gamma^{(k-1)}) \\
&\quad\propto P(\boldsymbol m, \boldsymbol u, \boldsymbol Z^{(1)}, \dots, \boldsymbol Z^{(k-1)} | \Gamma^{(1)}, \dots, \Gamma^{(k-1)}) \\
&\quad\propto \prod_{f=1}^F\prod_{\ell=0}^{L_f} m_{f\ell}^{ a_{f\ell} + \sum_{t_1 < t_2}^k \sum_{i=1}^{n_{t_1}} \sum_{j=1}^{n_{t_2}} \mathbb{I}((\boldsymbol x_{t_1 i}, \boldsymbol x_{t_2 j}) \in M)\cdot\gamma^{f\ell}(\boldsymbol x_{t_1 i}, \boldsymbol x_{t_2 j})}. \label{eqn:m-full-conditional}
\end{align}

We recognize the inside products in Equation \ref{eqn:m-full-conditional} as the kernel of a Dirichlet distribution, and so each vector \(\boldsymbol m_f\) for \(f=1,\dots,F\) has a conjugate Dirichlet full conditional distribution. Similarly, we can derive

\begin{equation}
P(\boldsymbol u | \boldsymbol m, \boldsymbol Z^{(1)}, \dots, \boldsymbol Z^{(k-1)}, \Gamma^{(1)}, \dots, \Gamma^{(k-1)})
\propto
\prod_{f=1}^F\prod_{\ell=0}^{L_f} u_{f\ell}^{ b_{f\ell} + \sum_{t_1 < t_2}^k \sum_{i=1}^{n_{t_1}} \sum_{j=1}^{n_{t_2}} \mathbb{I}((\boldsymbol x_{t_1 i}, \boldsymbol x_{t_2 j}) \notin M)\cdot\gamma^{f\ell}(\boldsymbol x_{t_1 i}, \boldsymbol x_{t_2 j})},
\label{eqn:u-full-conditional}
\end{equation}
and so each vector \(\boldsymbol u_f\) for \(f=1,\dots,F\) also has a conjugate Dirichlet full conditional distribution.

\hypertarget{sec:full-conditional-distributions-z}{%
\subsection{\texorpdfstring{Full conditional for \(\boldsymbol Z^{(t-1)}\)}{Full conditional for \textbackslash boldsymbol Z\^{}\{(t-1)\}}}\label{sec:full-conditional-distributions-z}}

Let \(T\) be a file number, \(2 \leq T \leq k\). We derive the full conditional distribution for \(\boldsymbol Z^{(T-1)}\), the matching vector introduced with file \(X_T\), starting from the posterior in Equation \ref{eqn:posterior-distribution}.

\begin{align}
&P(\boldsymbol Z^{(T-1)} | \boldsymbol m, \boldsymbol u, \boldsymbol Z^{(1)}, \dots,\boldsymbol Z^{(T-2)}, \boldsymbol Z^{(T)}, \dots, \boldsymbol Z^{(k-1)}, \Gamma^{(1)}, \dots, \Gamma^{(k-1)}) \\
&\quad\propto P(\boldsymbol m, \boldsymbol u, \boldsymbol Z^{(1)}, \dots, \boldsymbol Z^{(k-1)} | \Gamma^{(1)}, \dots, \Gamma^{(k-1)}) \\
&\quad\propto \left[ \frac{(N_{T-1} - n_{T\cdot}(\boldsymbol Z^{(T-1)}))!}{N_{T-1}!}\cdot\frac{\mbox{B}(n_{T\cdot}(\boldsymbol Z^{(T-1)}) + \alpha_\pi, n_T - n_{T\cdot}(\boldsymbol Z^{(T-1)}) + \beta_\pi)}{\mbox{B}(\alpha_\pi, \beta_\pi)} \right] \nonumber \\
&\quad\quad\times \prod_{t_2=T}^k \prod_{t_1=1}^{t_2-1} \prod_{i=1}^{n_{t_1}} \prod_{j=1}^{n_{t_2}} \prod_{f=1}^F \prod_{\ell=0}^{L_f} \left[m_{f\ell}^{\mathbb{I}((\boldsymbol x_{t_1 i}, \boldsymbol x_{t_2 j}) \in M)} u_{f\ell}^{\mathbb{I}((\boldsymbol x_{t_1 i}, \boldsymbol x_{t_2 j}) \notin M)} \right]^{\gamma^{f\ell}(\boldsymbol x_{t_1 i}, \boldsymbol x_{t_2 j})}, \label{eqn:z-full-conditional}
\end{align}
where
\begin{align*}
N_{t-1} &= n_1 + \dots + n_{t-1} \\
n_{t\cdot}(\boldsymbol Z^{(t-1)}) &= \sum_{j=1}^{n_t} \mathbb{I}(Z^{(t-1)}_j \leq N_{t-1}) \\
M &:= M(\boldsymbol Z^{(1)}, \dots, \boldsymbol Z^{(k-1)}) \\
&= \{(\boldsymbol x_{t_1 i}, \boldsymbol x_{t_2 j}): \boldsymbol x_{t_1 i}\text{ and }\boldsymbol x_{t_2 j}\text{ are linked}\}.
\end{align*}

Pairs of records, \(\boldsymbol x_{t_1 i}\) and \(\boldsymbol x_{t_2 j}\), where \(t_1, t_2 < T\) do not depend on \(\boldsymbol Z^{(T-1)}\) to be linked because of the constraints outlined in Section \ref{sec:preserving-transitivity}.

\hypertarget{supplemental-definitions-and-theorems}{%
\section{Supplemental Definitions and Theorems}\label{supplemental-definitions-and-theorems}}

\hypertarget{sec:sampler-definitions-theorems}{%
\subsection{Sampler Definitions and Theorems}\label{sec:sampler-definitions-theorems}}

\begin{definition}
\protect\hypertarget{def:gibbs-sampler-component}{}\label{def:gibbs-sampler-component}

\textbf{Component-wise sampler}. Define the component-wise sampler for sampling from the streaming record linkage model as follows:

\begin{enumerate}
\def\labelenumi{\arabic{enumi}.}
\tightlist
\item
  For \(f = 1, \dots, F\)

  \begin{enumerate}
  \def\labelenumii{\alph{enumii}.}
  \tightlist
  \item
    Update the vector \(\boldsymbol m_f\) from its conjugate full conditional Dirichlet distribution.
  \item
    Update the vector \(\boldsymbol u_f\) from its conjugate full conditional Dirichlet distribution.
  \end{enumerate}
\item
  For each vector \(\boldsymbol Z^{(\ell)}\), \(\ell = 1, \dots, k-1\)

  \begin{enumerate}
  \def\labelenumii{\alph{enumii}.}
  \tightlist
  \item
    For each index \(j = 1, \dots, n_{\ell+1}\), update the component \(Z^{(\ell)}_j\) from its full conditional distribution over all possible values, \(1, \dots, (n_1 + \dots + n_\ell), (n_1 + \dots + n_\ell + j)\).
  \end{enumerate}
\item
  Repeat steps 1 and 2 for \(s = 1, \dots, S\) times.
\end{enumerate}

\end{definition}

\begin{definition}
\protect\hypertarget{def:gibbs-sampler-lb}{}\label{def:gibbs-sampler-lb}

\textbf{Locally balanced sampler}. Define the locally balanced sampler for sampling from the streaming record linkage model as follows:

\begin{enumerate}
\def\labelenumi{\arabic{enumi}.}
\tightlist
\item
  For \(f = 1, \dots, F\)

  \begin{enumerate}
  \def\labelenumii{\alph{enumii}.}
  \tightlist
  \item
    Update the vector \(\boldsymbol m_f\) from its conjugate full conditional Dirichlet distribution.
  \item
    Update the vector \(\boldsymbol u_f\) from its conjugate full conditional Dirichlet distribution.
  \end{enumerate}
\item
  For each vector \(\boldsymbol Z^{(\ell)}\), \(\ell = 1, \dots, k-1\)

  \begin{enumerate}
  \def\labelenumii{\alph{enumii}.}
  \tightlist
  \item
    Propose a new value of \(\boldsymbol Z^{(\ell)}\) using locally balanced proposals \citep{zanella2020informed}. Each potential proposal takes a step through either the addition of a link, the removal of a link, swapping one end of a link, or exchanging ends of two links (double-swap). Proposal probabilities are weighted with barker weights, \(g(t) = t/(1+t)\).
  \item
    Accept or reject the proposal using the standard Metropolis-Hastings acceptance ratio for asymmetric proposals.
  \end{enumerate}
\item
  Repeat steps 1 and 2 for \(s = 1, \dots, S\) times.
\end{enumerate}

\end{definition}

\begin{theorem}
\protect\hypertarget{thm:component-sampling}{}\label{thm:component-sampling}The component-wise sampler (Definition \ref{def:gibbs-sampler-component}) produces an ergodic Markov chain with the streaming record linkage model posterior distribution as its target distribution.
\end{theorem}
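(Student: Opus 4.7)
The plan is to verify three things in sequence: (i) each step of the sampler is a bona fide Gibbs update with respect to the posterior in Equation \ref{eqn:posterior-distribution}, so the posterior is an invariant distribution of the Markov chain; (ii) the chain is aperiodic; and (iii) the chain is irreducible on the support $\mathcal{S}$ of the posterior, where $\mathcal{S} = \{(\boldsymbol m, \boldsymbol u, \boldsymbol Z^{(1:(k-1))}) : \boldsymbol m, \boldsymbol u \in \text{int}(\text{simplex}),\ \boldsymbol Z^{(1:(k-1))} \in \mathcal{C}_k\}$.

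For (i), I would appeal directly to the full conditionals derived in Appendices \ref{sec:full-conditional-distributions-m-u} and \ref{sec:full-conditional-distributions-z}. The $\boldsymbol m_f$ and $\boldsymbol u_f$ updates are conjugate Dirichlet draws, and the component-wise $Z^{(\ell)}_j$ update enumerates the full support listed in Definition \ref{def:gibbs-sampler-component}; values violating the link validity constraint in Definition \ref{def:link-validity} receive zero full-conditional weight through the indicator in Equation \ref{eqn:streaming-likelihood} and hence can be omitted from the normalization. A standard Gibbs argument then gives that the posterior is invariant under one full sweep. Aperiodicity (ii) is immediate, since under any valid current state the probability that $\boldsymbol m_f$, $\boldsymbol u_f$, and each $Z^{(\ell)}_j$ are updated to values in any neighborhood of their current values is strictly positive, so the chain has positive probability of returning near any state in one sweep; a more elementary route is to note that for each $Z^{(\ell)}_j$ the current value lies in its conditional support and receives positive probability, so the chain can (jointly with continuous updates mixing) have positive return probability.

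For (iii), I would exhibit a positive-probability path between any two states in $\mathcal{S}$ via a canonical ``unlinked'' state $\boldsymbol Z_0^{(1:(k-1))}$ in which every $Z^{(\ell)}_j = n_1 + \dots + n_\ell + j$ (no record links to any previous record). This state trivially lies in $\mathcal{C}_k$. Starting from an arbitrary valid state, I would update the components $Z^{(\ell)}_j$ in reverse order of $\ell$ (from $k-1$ down to $1$), each time to its non-match value; because the non-match value is always in the support and the corresponding posterior contribution is nonzero, each such update has positive full-conditional probability. Conversely, given a target valid state $\boldsymbol Z_*^{(1:(k-1))}$, I would update components in order of increasing $\ell$ to their target values. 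The key observation is that a partial assignment consisting of the target values for $\boldsymbol Z^{(1)}, \dots, \boldsymbol Z^{(\ell-1)}$ together with unlinked values for $\boldsymbol Z^{(\ell)}, \dots, \boldsymbol Z^{(k-1)}$ still lies in $\mathcal{C}_k$, since removing links cannot create constraint violations; hence the next scheduled target move also has positive full-conditional probability. Continuous parameters mix via their Dirichlet updates on the open simplex.

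The main obstacle is (iii): because $\mathcal{C}_k$ is a combinatorially constrained space and each component move can in principle be blocked by links held fixed in other matching vectors, one must be careful about the order of single-component moves. The reverse-then-forward schedule through the unlinked state resolves this because deleting links can never violate Definition \ref{def:link-validity}, and adding the target links in order of increasing $\ell$ guarantees that the partial configuration at each intermediate step is a sub-matching of a valid target, hence itself valid. A minor bookkeeping point is that a single Gibbs sweep is fixed in its order, so the above path is realized across finitely many sweeps; this is harmless for irreducibility. Combining invariance with aperiodicity and irreducibility on $\mathcal{S}$ yields ergodicity with the posterior as the unique stationary distribution.
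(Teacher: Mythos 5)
Your proposal is correct and follows essentially the same route as the paper: establish invariance from the fact that every step is an exact full-conditional draw, then prove irreducibility by exhibiting a positive-probability path through the completely unlinked state, using the two key facts that deleting links can never violate the link validity constraint and that any partial sub-configuration of a valid target matching is itself valid. Your treatment is somewhat more explicit than the paper's (it spells out aperiodicity and a specific ordering of the component moves across sweeps), but the underlying argument is the same.
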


\begin{proof}
The sampler in Definition \ref{def:gibbs-sampler-component} is a Gibbs algorithm which samples directly from the full conditional distributions of the parameters in sequence. Therefore if we prove that the resulting Markov chain is irreducible, then it is ergodic and samples from the posterior distribution. From an initial state with non-zero probability, \((\boldsymbol m, \boldsymbol u, \boldsymbol Z^{(1)}, \dots, \boldsymbol Z^{(k-1)})\), a new state with non-zero probability, \((\boldsymbol m_\ast, \boldsymbol u_\ast, \boldsymbol Z^{(1)}_\ast, \dots, \boldsymbol Z^{(k-1)}_\ast)\), may always be reached through a sequence of non-zero probability steps. For the matching vectors, first remove all existing links from \((\boldsymbol Z^{(1)}, \dots, \boldsymbol Z^{(k-1)})\) one component at a time until the completely unlinked state is reached. In the next iteration, add all links in \((\boldsymbol Z^{(1)}_\ast, \dots, \boldsymbol Z^{(k-1)}_\ast)\) one component at a time. All components of \(\boldsymbol m\) and \(\boldsymbol u\) are strictly positive, so states have zero posterior probability if and only if the state is invalid (Definition \ref{def:link-validity}) and the indicator in the likelihood equals zero. As states are invalid due to conflicting links, removing links can never turn a valid state to invalid. Since \((\boldsymbol Z^{(1)}_\ast, \dots, \boldsymbol Z^{(k-1)}_\ast)\) is valid and has nonzero posterior probability, constructing it one link at a time will never result in an invalid state.
\end{proof}

\begin{theorem}
\protect\hypertarget{thm:lb-sampling}{}\label{thm:lb-sampling}The locally balanced sampler (Definition \ref{def:gibbs-sampler-lb}) produces an ergodic Markov chain with the streaming record linkage model posterior distribution as its target distribution.
\end{theorem}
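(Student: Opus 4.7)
The plan is to follow the three-part structure used in the proof of Theorem \ref{thm:component-sampling}: first show that each kernel in Definition \ref{def:gibbs-sampler-lb} leaves the posterior in Equation \ref{eqn:posterior-distribution} invariant, then establish irreducibility on the support of the posterior, and finally note that aperiodicity is automatic. Invariance will follow from standard Metropolis-within-Gibbs theory and aperiodicity will be immediate from positive rejection probabilities, so the bulk of the work is in verifying irreducibility of the locally balanced matching-vector updates.

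For invariance, I would observe that the updates to $\boldsymbol m_f$ and $\boldsymbol u_f$ are exact draws from the conjugate Dirichlet full conditionals derived in Appendix \ref{sec:full-conditional-distributions-m-u} and therefore preserve the posterior. The update of each $\boldsymbol Z^{(\ell)}$ is a Metropolis--Hastings move on a discrete space using the locally balanced proposal $Q_g$ of \citet{zanella2020informed} together with the standard asymmetric MH acceptance ratio. By construction this kernel satisfies detailed balance against the full conditional of $\boldsymbol Z^{(\ell)}$, so it preserves the joint posterior when composed with the Gibbs updates of $\boldsymbol m$ and $\boldsymbol u$ under the Metropolis-within-Gibbs framework.

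For irreducibility, I would mirror the argument used for Theorem \ref{thm:component-sampling}. From any valid starting state with positive posterior density, I would construct a path to any other valid state by first removing every existing link in each $\boldsymbol Z^{(\ell)}$ one at a time using the delete moves of the local kernel $K$ until reaching the fully unlinked matching, then introducing each link of the target matching one at a time using the add moves of $K$. Every intermediate state remains in $\mathcal{C}_k$: deleting a link from a valid state never creates a conflict, and each partial add lies inside the valid target matching and so cannot create a conflict either. The locally balanced weight $g(\pi(y)/\pi(x)) = \pi(y)/(\pi(x)+\pi(y))$ is strictly positive whenever the destination has positive posterior density, and the Metropolis--Hastings acceptance probability for locally balanced proposals reduces to the ratio of normalising constants, which is also strictly positive. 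Combined with the full support of the Dirichlet Gibbs draws for $\boldsymbol m$ and $\boldsymbol u$, this shows the chain can move from any positive-density state to any other in finitely many iterations.

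The main obstacle will be the positivity book-keeping for the proposal kernel $K$: I need to verify that each single add and delete move used in the path above is selected with strictly positive probability by $K$, and that when blocking is enabled the random choice of block does not permanently exclude a required single-link modification. This reduces to showing that the uniform selection over the four move types, together with the random choice of block across iterations, assigns positive probability to any single-link addition or deletion that the target matching requires. Once this positivity is in hand, aperiodicity follows immediately from the positive rejection probability at any MH step, and ergodicity follows from invariance, irreducibility, and aperiodicity via standard Markov chain theory.
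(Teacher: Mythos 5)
Your proposal follows essentially the same route as the paper's proof: treat the sampler as Metropolis--Hastings within Gibbs so that invariance is standard, then establish irreducibility by a path through the completely unlinked state using single-link delete and add moves, with positivity of each step guaranteed because the locally balanced weights are positive at any positive-density destination. Your additional book-keeping about the blocked kernel and aperiodicity is more careful than the paper, which leaves those points implicit, but it does not change the argument.
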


\begin{proof}
The sampler in Definition \ref{def:gibbs-sampler-lb} is a Metropolis-Hastings within Gibbs algorithm. Therefore it is sufficient to show that the resulting chain is irreducible. Similarly to the proof of Theorem \ref{thm:component-sampling}, we show there is a non-zero probability path between a starting state, \((\boldsymbol m, \boldsymbol u, \boldsymbol Z^{(1)}, \dots, \boldsymbol Z^{(k-1)})\), and an ending state, \((\boldsymbol m_\ast, \boldsymbol u_\ast, \boldsymbol Z^{(1)}_\ast, \dots, \boldsymbol Z^{(k-1)}_\ast)\), via the completely unlinked state. In each iteration, the locally balanced proposals may remove a single link or add a single link to each vector \(\boldsymbol Z^{(1)}, \dots, \boldsymbol Z^{(k-1)}\). As in the proof of Theorem \ref{thm:component-sampling}, each of these steps are to states with positive probability. Since the locally balanced proposals are weighted by the target density, they can be proposed with positive probability.
\end{proof}

\begin{theorem}
\protect\hypertarget{thm:pprb-within-gibbs-sampling-restated}{}\label{thm:pprb-within-gibbs-sampling-restated}The PPRB-within-Gibbs sampler (Definition \ref{def:pprb-within-gibbs}) produces an ergodic Markov chain with the model's posterior distribution as its target distribution if the target distribution satisfies the positivity condition,
\[p(\boldsymbol \theta_1 | \boldsymbol y_1, \boldsymbol y_2) > 0,\ p(\boldsymbol \theta_2 | \boldsymbol y_1, \boldsymbol y_2) > 0,\ p(\boldsymbol \theta_3 | \boldsymbol y_1, \boldsymbol y_2) > 0 \implies p(\boldsymbol \theta_1, \boldsymbol \theta_2, \boldsymbol \theta_3 | \boldsymbol y_1, \boldsymbol y_2) > 0.\]
\end{theorem}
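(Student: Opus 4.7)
The plan is to verify that the three updates jointly preserve the target joint posterior $p(\boldsymbol\theta_1,\boldsymbol\theta_2,\boldsymbol\theta_3 \mid \boldsymbol y_1,\boldsymbol y_2)$ and then use the positivity hypothesis to obtain irreducibility, which combined with aperiodicity gives ergodicity. Steps~1 and~3 are exact Gibbs draws from the corresponding full conditionals and therefore preserve the joint posterior by the standard argument. The substantive step is Step~2, which I would interpret as an independent Metropolis-Hastings update whose proposal density is $q(\boldsymbol\theta_1^\ast)=p(\boldsymbol\theta_1^\ast\mid\boldsymbol y_1)$, treating the pool $\{\boldsymbol\theta_1^s\}_{s=1}^S$ as i.i.d.\ draws from that distribution.

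The key algebraic identity, which I would derive from the conditional-independence structure of the model ($\boldsymbol y_1 \perp \boldsymbol y_2$ given all parameters, and $\boldsymbol y_1$ not depending on $\boldsymbol\theta_3$), is
\[
p(\boldsymbol\theta_1 \mid \boldsymbol\theta_2,\boldsymbol\theta_3,\boldsymbol y_1,\boldsymbol y_2) \;\propto\; p(\boldsymbol y_2 \mid \boldsymbol\theta_1,\boldsymbol\theta_2,\boldsymbol\theta_3)\, p(\boldsymbol\theta_2 \mid \boldsymbol\theta_1,\boldsymbol y_1)\, p(\boldsymbol\theta_1 \mid \boldsymbol y_1).
\]
This is obtained by rewriting $p(\boldsymbol y_1\mid\boldsymbol\theta_1,\boldsymbol\theta_2) = p(\boldsymbol\theta_2\mid\boldsymbol\theta_1,\boldsymbol y_1)\,p(\boldsymbol y_1\mid\boldsymbol\theta_1)/p(\boldsymbol\theta_2)$, using prior independence of $\boldsymbol\theta_1$ and $\boldsymbol\theta_2$, and folding $p(\boldsymbol y_1\mid\boldsymbol\theta_1)\,p(\boldsymbol\theta_1)$ into $p(\boldsymbol\theta_1\mid\boldsymbol y_1)$. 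Plugging this into the standard M-H ratio and cancelling the $p(\boldsymbol\theta_1\mid\boldsymbol y_1)$ and $p(\boldsymbol\theta_1^\ast\mid\boldsymbol y_1)$ factors against $q$ yields exactly the acceptance probability stated in Definition~\ref{def:pprb-within-gibbs}. Detailed balance with respect to $p(\boldsymbol\theta_1\mid\boldsymbol\theta_2,\boldsymbol\theta_3,\boldsymbol y_1,\boldsymbol y_2)$ then follows from the usual M-H argument, so the joint posterior is invariant under Step~2 as well.

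With the correct invariant distribution in hand, I would invoke positivity to obtain irreducibility via a Hammersley--Clifford style argument: every point in the joint support can be reached from any other through a sequence of coordinate updates, each passing through intermediate states of positive density. The only nonstandard ingredient is the support of the PPRB proposal, for which I would observe
\[
p(\boldsymbol\theta_1 \mid \boldsymbol y_1,\boldsymbol y_2) \;\propto\; p(\boldsymbol y_2 \mid \boldsymbol\theta_1,\boldsymbol y_1)\, p(\boldsymbol\theta_1 \mid \boldsymbol y_1),
\]
so that $p(\boldsymbol\theta_1\mid\boldsymbol y_1,\boldsymbol y_2)>0$ forces $p(\boldsymbol\theta_1\mid\boldsymbol y_1)>0$; hence the PPRB proposal distribution dominates the marginal target and the M-H step has positive acceptance probability on its entire support. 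Aperiodicity is immediate from the strictly positive rejection probability in Step~2 at generic states.

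The main obstacle is the finite-pool subtlety: strictly, the $S$ stored draws constitute a discrete proposal, not an exact draw from $p(\boldsymbol\theta_1\mid\boldsymbol y_1)$. I would handle this by framing the result in the idealized regime in which the pool is treated as an oracle returning i.i.d.\ samples from $p(\boldsymbol\theta_1\mid\boldsymbol y_1)$, as in \citet{hooten2019making}; for finite $S$ the sampler targets an approximation to the posterior that converges to it as $S\to\infty$. Once the proposal is accepted as a bona fide density on the parameter space, the remaining work is routine book-keeping with Metropolis-within-Gibbs.
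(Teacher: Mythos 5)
Your proof follows essentially the same route as the paper's: derive the independent Metropolis--Hastings acceptance ratio for the $\boldsymbol\theta_1$ update using the proposal $p(\boldsymbol\theta_1^\ast \mid \boldsymbol y_1)$ together with the model's conditional-independence structure, verify that the proposal's support dominates that of the full conditional (the paper states the contrapositive, $p(\boldsymbol\theta_1\mid\boldsymbol y_1)=0 \implies p(\boldsymbol\theta_1\mid\boldsymbol\theta_2,\boldsymbol\theta_3,\boldsymbol y_1,\boldsymbol y_2)=0$), and invoke the positivity condition for irreducibility of the Gibbs scan. Your explicit handling of the finite-pool subtlety --- that the $S$ stored draws form an empirical rather than exact proposal from $p(\boldsymbol\theta_1\mid\boldsymbol y_1)$, so the result holds in the idealized regime --- is a point the paper's proof passes over silently, but it does not change the substance of the argument.
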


\begin{proof}
First, we show that the Metropolis-Hastings acceptance ratio, \(\alpha\), in step 2 is appropriate for the target distribution. Since the proposals come from the distribution, \(p(\boldsymbol \theta_1^\ast | \boldsymbol y_1)\), the acceptance ratio would be
\begin{align*}
\alpha &= \frac{p(\boldsymbol \theta_1^\ast | \boldsymbol \theta_2, \boldsymbol \theta_3, \boldsymbol y_1, \boldsymbol y_2)}{p(\boldsymbol \theta_1 | \boldsymbol \theta_2, \boldsymbol \theta_3, \boldsymbol y_1, \boldsymbol y_2)}\frac{p(\boldsymbol \theta_1 | \boldsymbol y_1)}{p(\boldsymbol \theta_1^\ast | \boldsymbol y_1)} \\
&= \frac{p(\boldsymbol \theta_1^\ast, \boldsymbol \theta_2, \boldsymbol \theta_3 | \boldsymbol y_1, \boldsymbol y_2)}{p(\boldsymbol \theta_1, \boldsymbol \theta_2, \boldsymbol \theta_3 | \boldsymbol y_1, \boldsymbol y_2)}\frac{p(\boldsymbol \theta_1 | \boldsymbol y_1)}{p(\boldsymbol \theta_1^\ast | \boldsymbol y_1)} \\
&= \frac{p(\boldsymbol y_1 | \boldsymbol \theta_1^\ast, \boldsymbol \theta_2)p(\boldsymbol y_2 | \boldsymbol \theta_1^\ast, \boldsymbol \theta_2, \boldsymbol \theta_3)p(\boldsymbol \theta_1^\ast)p(\boldsymbol \theta_2)p(\boldsymbol \theta_3)}{p(\boldsymbol y_1 | \boldsymbol \theta_1, \boldsymbol \theta_2)p(\boldsymbol y_2 | \boldsymbol \theta_1, \boldsymbol \theta_2, \boldsymbol \theta_3)p(\boldsymbol \theta_1)p(\boldsymbol \theta_2)p(\boldsymbol \theta_3)}\frac{p(\boldsymbol \theta_1 | \boldsymbol y_1)}{p(\boldsymbol \theta_1^\ast | \boldsymbol y_1)} \\
&= \frac{p(\boldsymbol y_2 | \boldsymbol \theta_1^\ast, \boldsymbol \theta_2, \boldsymbol \theta_3)}{p(\boldsymbol y_2 | \boldsymbol \theta_1, \boldsymbol \theta_2, \boldsymbol \theta_3)}\frac{p(\boldsymbol \theta_1^\ast, \boldsymbol \theta_2 | \boldsymbol y_1)}{p(\boldsymbol \theta_1, \boldsymbol \theta_2 | \boldsymbol y_1)}\frac{p(\boldsymbol \theta_1 | \boldsymbol y_1)}{p(\boldsymbol \theta_1^\ast | \boldsymbol y_1)} \\
&= \frac{p(\boldsymbol y_2 | \boldsymbol \theta_1^\ast, \boldsymbol \theta_2, \boldsymbol \theta_3)}{p(\boldsymbol y_2 | \boldsymbol \theta_1, \boldsymbol \theta_2, \boldsymbol \theta_3)}\frac{p(\boldsymbol \theta_2 | \boldsymbol \theta_1^\ast, \boldsymbol y_1)}{p(\boldsymbol \theta_2 | \boldsymbol \theta_1, \boldsymbol y_1)}.
\end{align*}

Second, we have that \(p(\boldsymbol \theta_1 | \boldsymbol y_1) = 0 \implies p(\boldsymbol \theta_1 | \boldsymbol \theta_2, \boldsymbol \theta_3, \boldsymbol y_1, \boldsymbol y_2) = 0\) since the latter distribution is conditioned on a superset of random variables as the former. Therefore the distribution \(p(\boldsymbol \theta_1 | \boldsymbol y_1)\) works as an independent Metropolis-Hastings proposal distribution for the target \(p(\boldsymbol \theta_1 | \boldsymbol \theta_2, \boldsymbol \theta_3, \boldsymbol y_1, \boldsymbol y_2)\).

Finally, the positivity condition implies that a Gibbs sampler is irreducible, and so the algorithm produces an ergodic Markov chain. \citep{robert2005monte}
\end{proof}

\hypertarget{sec:pprb-within-gibbs-streamingrl}{%
\subsubsection{PPRB-within-Gibbs sampler for Streaming RL model}\label{sec:pprb-within-gibbs-streamingrl}}

We perform the three steps of each iteration as

\begin{enumerate}
\def\labelenumi{\arabic{enumi}.}
\tightlist
\item
  For \(f = 1, \dots, F\)

  \begin{enumerate}
  \def\labelenumii{\alph{enumii}.}
  \tightlist
  \item
    Update the vector \(\boldsymbol m_f\) from its conjugate full conditional Dirichlet distribution (see Appendix \ref{sec:full-conditional-distributions-m-u}).
  \item
    Update the vector \(\boldsymbol u_f\) from its conjugate full conditional Dirichlet distribution (see Appendix \ref{sec:full-conditional-distributions-m-u}).
  \end{enumerate}
\item
  (PPRB step) Propose a new value \((\boldsymbol Z^{(1)}_\ast, \dots, \boldsymbol Z^{(k-2)}_\ast)\) by drawing from the existing posterior samples (with replacement). Accept or reject the proposal using the Metropolis-Hastings ratio, \[\alpha = \min\left(\frac{p(\Gamma^{(k-1)} | \boldsymbol Z^{(1)}_\ast,\dots, \boldsymbol Z^{(k-2)}_\ast, \boldsymbol m, \boldsymbol u, \boldsymbol Z^{(k-1)})}{p(\Gamma^{(k-1)} | \boldsymbol Z^{(1)},\dots, \boldsymbol Z^{(k-2)}, \boldsymbol m, \boldsymbol u, \boldsymbol Z^{(k-1)})} \cdot \frac{p(\boldsymbol m, \boldsymbol u | \boldsymbol Z^{(1)}_\ast,\dots, \boldsymbol Z^{(k-2)}_\ast, \Gamma^{(1)}, \dots, \Gamma^{(k-2)})}{p(\boldsymbol m, \boldsymbol u | \boldsymbol Z^{(1)},\dots, \boldsymbol Z^{(k-2)}, \Gamma^{(1)}, \dots, \Gamma^{(k-2)})}, 1\right)\]
\item
  Update the value of \(\boldsymbol Z^{(k-1)}\) using a Metropolis-Hastings proposal targeting its full conditional distribution. We examine two such possible proposals in Section \ref{sec:z-proposals}.
\end{enumerate}

\hypertarget{simulation-and-sampling-details}{%
\section{Simulation and Sampling Details}\label{simulation-and-sampling-details}}

This appendix contains details for MCMC runs and simulation studies whose results are presented in the main body of the paper.

\hypertarget{sec:link-accuracy-model-details}{%
\subsection{Link Accuracy Comparison}\label{sec:link-accuracy-model-details}}

\noindent \textbf{Proposed Model: Streaming Record Linkage}

The sampler was run for 2500 iterations, discarding the first 500. We set \(\alpha_\pi = \beta_\pi = 1\) as an uninformative prior for \(\boldsymbol Z^{(1)}\), \(\boldsymbol Z^{(2)}\), and \(\boldsymbol Z^{(3)}\). Flat Dirichlet priors were chosen for \(\boldsymbol u\), and three choices of prior strength were used for \(\boldsymbol m\) (Flat, Weak, Strong). Component-wise proposals were used for \(\boldsymbol Z^{(1)}\), \(\boldsymbol Z^{(2)}\), and \(\boldsymbol Z^{(3)}\) to avoid needing excessive burn-in. We found that a Gibbs sampler with locally balanced proposals required too many iterations to converge to the target posterior distribution to be computationally feasible.

\noindent \textbf{Multilink} \citep{aleshinguendel2021multifile}

We use flat Dirichlet priors for the \(\boldsymbol m\) and \(\boldsymbol u\) parameters, \(\boldsymbol \alpha = \boldsymbol 1\) for the Dirichlet-multinomial overlap table prior on the partitions and a uniform prior on the number of clusters. For each of the simulated datasets, we produce 1000 posterior samples after a 500 iteration burn-in from an initial state of no linked pairs.

\noindent \textbf{Blink} \citep{steorts2015entity}

For string fields, we choose a steepness parameter \(c=1\) and the generalized Levenshtein distance of the \texttt{R} function \texttt{adist}. For categorical fields, we choose beta parameters \(a=5\) and \(b=20\) to encode prior knowledge of between 1 and 4 errors per record, or a distortion probability of between 0.1 and 0.4. For each simulated dataset, we produce 1000 posterior samples after a 5000 iteration burn-in.

\noindent \textbf{Support Vector Machine}

Training pairs were chosen as evenly as possible between coreferent and non-coreferent pairs, which sometimes resulted in all coreferent pairs being included in the training set.

\hypertarget{sec:speed-comparison-sampler-details}{%
\subsection{Speed Comparison}\label{sec:speed-comparison-sampler-details}}

The Gibbs sampler was run using component-wise full conditional updates for \(\boldsymbol Z^{(1)}\), \(\boldsymbol Z^{(2)}\) and \(\boldsymbol Z^{(3)}\) for 2500 iterations, discarding the first 500 for burn-in. Each PPRB update was run for 5000 iterations, discarding the first 1000 for burn-in. The SMCMC updates used ensembles of size 200 and were computed with 12 parallel processes. SMCMC-Comp used 5 jumping kernel iterations and 50 transition kernel iterations, SMCMC-LB used 50 jumping kernel iterations and and 200 transition kernel iterations, and SMCMC-Mixed used 5 jumping kernel iterations and 200 transition kernel iterations. All locally balanced proposals used a block size of 75 records.

\hypertarget{sec:poland-mcmc-details}{%
\subsection{Social Diagnosis Survey Analysis}\label{sec:poland-mcmc-details}}

The Gibbs sampler was run for 2500 iterations, discarding the first 500 for burn-in. Each PPRB update was run for 5000 iterations, discarding the first 1000 for burn-in. The SMCMC updates used ensembles of size 200 and were computed with 12 parallel processes. SMCMC-Comp used 5 jumping kernel iterations and 50 transition kernel iterations, SMCMC-LB used 500 jumping kernel iterations and and 200 transition kernel iterations, and SMCMC-Mixed used 5 jumping kernel iterations and 200 transition kernel iterations. All locally balanced proposals used a block size of 150 records.

\end{document}